\def\BibTeX{{\rm B\kern-.05em{\sc i\kern-.025em b}\kern-.08em
    T\kern-.1667em\lower.7ex\hbox{E}\kern-.125emX}}
\newtheorem{lemma}{Lemma}
\newtheorem{corollary}{Corollary}[section]
\title{\bfseries A Polylogarithmic-Time Quantum Algorithm for the Laplace Transform}
\author{
    \small{Akash Kumar Singh\textsuperscript{1,2}} \and
    \small{Ashish Kumar Patra\textsuperscript{2}} \and
    \small{Anurag K.S.V\textsuperscript{2}} \and
    \small{Sai Shankar P.\textsuperscript{2}} \and
    \small{Ruchika Bhat\textsuperscript{3}} \and
    \small{Jaiganesh G.\textsuperscript{2,}}\thanks{Corresponding Author: drjaiganesh15@gmail.com, jaiganesh@qclairvoyance.in} \and
}
\date{}
\begin{document}

\maketitle
\vspace{-1.5em}
\begin{center}
\small{\textsuperscript{1}\,School of Quantum Technology, Defence Institute of Advanced Technology, Pune, MH 411025, India} \\[3pt]
\small{\textsuperscript{2}\,Qclairvoyance Quantum Labs, Secunderabad, TG 500094, India} \\[3pt]
\small{\textsuperscript{3}\,The University of Arizona, Tucson, AZ 85721, USA} \\[6pt]
\end{center}
\vspace{2em}

\begin{abstract}
We introduce a quantum algorithm to perform the Laplace transform on quantum computers. Already, the quantum Fourier transform (QFT) is the cornerstone of many quantum algorithms, but the Laplace transform or its discrete version has not seen any efficient implementation on quantum computers due to its dissipative nature and hence non-unitary dynamics. However, a recent work has shown an efficient implementation for certain cases on quantum computers using the Taylor series. Unlike previous work , our work provides a completely different algorithm for doing Laplace Transform using Quantum Eigenvalue Transformation and Lap-LCHS, very efficiently at points which form an arithmetic progression. Our algorithm can implement $N \times N$ discrete Laplace transform in gate complexity that grows as $\order{(\log N)^{3}}$, assuming the efficient state preparation, where $N=2^n$ and $n$ is the number of qubits,
which is a superpolynomial speedup in number of gates over the best classical counterpart that has complexity $\order{N \log N}$ for the same cases. Also, the circuit width grows as $\order{\log N}$. Quantum Laplace Transform (QLT) may enable new Quantum algorithms for cases like solving differential equations in the Laplace domain, developing an inverse Laplace transform algorithm on quantum computers, imaginary time evolution in the resolvent domain for calculating ground state energy, and spectral estimation of non-Hermitian matrices.

\end{abstract}
\vspace{1em}
\noindent\textbf{Keywords:}  Quantum Computing, Quantum Laplace Transform, Quantum Algorithm, Quantum Eigenvalue Transformation, Linear Combination of Hamiltonian Simulation, Lap-LCHS, Gate Complexity

\section{Introduction}
\label{sec:introduction}

The Laplace transform stands as an indispensable mathematical tool across diverse fields of physics and engineering. Its utility spans from simplifying ordinary and partial differential equations into more manageable algebraic forms to extensive applications in signal processing. Classically, numerical implementations of the discrete Laplace transform for an $N \times N$ matrix typically incur a computational scaling of $\order{N \log N}$~\cite{Rabiner1969} in some special cases, or degrade to $\order{N^2}$~\cite{Loh2023FastDLT} in worst-case scenarios. In terms of precision it scales as $\order{\log(1/\epsilon)}$, where $\epsilon$ is the desired accuracy. While direct Laplace transform computation might not always be the bottleneck in classical algorithms, the inverse Laplace transform often poses significant numerical challenges when solving differential equations.

In contrast, quantum computing has demonstrated remarkable speedups for certain computational problems. The Quantum Fourier Transform (QFT)~\cite{coppersmith2002_qft}, for instance, provides a significant acceleration for the Discrete Fourier Transform (DFT), reducing its gate complexity from classical $\order{N \log N}$ to quantum $\order{(\log N)^2}$. The QFT forms the bedrock of numerous influential quantum algorithms, including Quantum Phase Estimation (QPE)~\cite{kitaev1995quantummeasurementsabelianstabilizer} and Shor's algorithm. A similar breakthrough for a QLT could revolutionize the field of quantum computing.

However, the efficient implementation of a QLT has remained an inherently difficult and largely unresolved problem. The primary challenge stems from the dissipative nature of the Laplace transform, which does not preserve the norm of quantum states. Consequently, the corresponding operator is non-unitary, a property that is incompatible with the unitary evolution fundamental to quantum computation. This gap in quantum algorithm development, specifically for other commonly used transforms beyond the Fourier transform, motivated our investigation.

\subsection{Related works}
Recently, Zylberman et. al.\cite{zylberman2024fastlaplacetransformsquantum} gave an algorithm to implement the QLT, which
achieves polynomial scaling or better for both gate complexity and width of quantum circuit in number of qubits and
precision for some specific cases. More specifically, whenever the non-unitary diagonal operator in their
algorithm is implementable in polynomial depth in qubits and precision, their algorithm can also be
implemented in polynomial depth. Also, for some cases when a diagonal operator can be implemented
with circuit depth independent of qubits `n', their QLT algorithm can achieve a double exponential speed
up over classical methods with depth scaling as $\order{log(logN)}$ where `N' is $2^n$. The problem with
the algorithm is that for an arbitrary n-qubit non-unitary diagonal operator, it generally scales 
exponentially with `n' in terms of depth or circuit size.

In this work we utilize Linear Combination of Hamiltonian Simulation (LCHS). This framework is conceptually distinct from the algorithmic methodology of ~\cite{zylberman2024fastlaplacetransformsquantum}.

In this work we utilize Linear Combination of Hamiltonian Simulation (LCHS). This framework is conceptually distinct from the algorithmic methodology of ~\cite{zylberman2024fastlaplacetransformsquantum}.LCHS basically does a Quantum Eigenvalue Transformation. Quantum Algorithms for QSVT(Quantum Singular Value Transformation) are well known, but no single technique or strategy is known for quantum eigenvalue transformation, especially for non-normal matrices. Recently, Dong et al. \cite{an2024laplacetransformbasedquantum} provided a way to do a specific type of matrix Laplace transform, which can be interpreted as performing a Laplace transform on eigenvalues of a particular matrix using LCHS. 

We build upon Lap-LCHS framework to perform the Laplace transform very efficiently on an actual vector, which is not immediately evident from its original formulation. By ‘efficiently’, we mean that the transform can be implemented with polylogarithmic gate complexity and a logarithmic asymptotic scaling for width of the circuit in the terms of $N$ which is the number of rows or columns in the Discrete Laplace Transform matrix ($N=2^n$, where n is the number of qubits) by exploiting the arithmetic-progression structure of the Laplace variable $s$.

\subsection{Our contributions}
The QLT is a quantum algorithmic implementation of the Laplace Transform by discretizing it, drawing a parallel to how Quantum Fourier Transform (QFT) enables the efficient computation of the Discrete Fourier Transform (DFT).\\
The classical Laplace transform of a function $g(t)$ is defined as
\begin{equation}
\mathcal{L}\{f(t)\} = \int_{0}^{\infty} e^{-st} g(t) dt
\label{eq:classical_laplace}
\end{equation}

The discretized Laplace transform matrix acting on a vector of length 
$N = 2^n$ can be defined by discretizing time $t$ and the Laplace variable $s$. 
By introducing an appropriate normalization factor, this matrix naturally leads 
to the definition of the \textit{QLT}, which we also denote as 
$\hat{\mathrm{QLT}}$. Thus, the $\hat{QLT}$ is obtained as the normalized version of the 
discretized Laplace transform matrix. This matrix is non-unitary and can be represented as:
\begin{equation}
\hat{QLT} = \frac{1}{N} \begin{pmatrix}
e^{-s_1 t_1} & \dots & e^{-s_1 t_N} \\
\vdots & \ddots & \vdots \\
e^{-s_N t_1} & \dots & e^{-s_N t_N}
\end{pmatrix}
\label{eq:qlt_matrix}
\end{equation}

The exact correspondence to QFT can be established if we omit the negative exponent sign; the parameter $t$ can be discretized to range $0$ to $N$ with unit increments, and the Laplace parameter $s$ can adopt complex integer forms with constant differences like $1+1i$, $2+2i$...\\
Our contributions primarily involve leveraging Quantum Eigenvalue Transformation (QET) through Lap-LCHS \cite{an2024laplacetransformbasedquantum} to
efficiently implement the QLT.
The quantum algorithm described here gives an efficient implementation for the constant difference case and even when the real and imaginary parts form separate arithmetic progressions, given that the convergence condition $Re(s) \ge 0$ and $g(t) \in L^1(\mathbb{R}_{+})$ are met. Although our proof focuses on the case where $a_j=b_j$ and form the same arithmetic progression, extending the argument to $a_j \ne b_j$ with independent arithmetic progressions is straightforward and maintains the identical asymptotic speed-up. There is a possibility of getting the same speedup by using real and imaginary part values sampled from certain polynomials, proving which is out of scope of this work. Here $a_j$ and $b_j$ are the real and imaginary parts of the Laplace variable $'s'$.
Beyond these insight and proofs, our key contributions also are:

\begin{itemize}

\item \textbf{Encoding the Laplace Variable}: 
 We encode the Laplace variable $'s'$ values into the eigenvalues $\lambda_1, \lambda_2, \dots$ of a diagonal matrix $A$ . Then the transform $\mathcal{}h(A)$ has eigenvalues $\mathcal{}h(\lambda_1), \mathcal{}h(\lambda_2), \dots$. This is exactly the form of laplace transform evaluated at $s=\lambda_1,\lambda_2,\lambda_3...$.  Using this we can perform the QLT for any function $g(t) \in L^1(\mathbb{R}^+)$. This insight, not explicitly highlighted in prior work, is a cornerstone of our algorithm.

\begin{equation}
A = \begin{pmatrix}
s_0 & 0 & 0 & \dots \\
0 & s_1 & 0 & \dots \\
0 & 0 & s_2 & \dots \\
\vdots & \vdots & \vdots & \ddots
\end{pmatrix}, \quad \text{where } s_j = a_j + i b_j
\end{equation}

\item \textbf{Encoding the information about $g(t)$}: The QLT matrix in \cite{zylberman2024fastlaplacetransformsquantum} does not contain any information about the function on which the Laplace transform is to be performed. The information about $g(t)$ needs to be encoded into a quantum state and then acted upon by the QLT matrix in block-encoded form \cite{zylberman2024fastlaplacetransformsquantum}. But in our algorithm, where we use Lap-LCHS, the information of the $g(t)$ is included in the unitary matrices which are part of both preparation and unpreparation parts of LCU circuit. The information about the Laplace variable, that is $s$, is in the $A$ matrix, and the information about $g(t)$ is encoded into a quantum oracle $U_l$ and its transpose, which prepares the corresponding encoded quantum state and then unprepares it.\textbf{ We can extract information if we act $h(A)$ (which is an overall unitary operator combining preparation, selection, and unpreparation) upon an equal superposition state.} We can then interpret it as if QLT matrix is subsequently applied to the encoded state $g(t)$ in block-diagonal form.

\item \textbf{Efficient SELECT Operator Implementation}:
We exploit the arithmetic progression structure and commutativity of matrices \( L \) and \( H \) (Hermitian and Anti-Hermitian part of $A$ resp.) to significantly reduce complexity of implementing $SELECT$ operator. We have done this by converting our discretized summation into a product of unitaries, which reduces the number of unitaries to be implemented from $ \mathcal {O}(N^2)$ to $\mathcal{O}\!\bigl(\log^{2} N\bigr)$. Each such unitary is implemented efficiently using basic elementary gates via single-step \textit{Trotterization}/Product formula. Given \( [H,L] = 0 \) (since both are diagonal matrices) and their entries forming an arithmetic progression in real and imaginary parts, a single-step Trotterization suffices which results in decomposition into controlled Phase gates and controlled Phase shift gates.

\item \textbf{Reduced Complexity via Controlled Gates Optimization}: 
We eliminated multi-controlled unitary requiring more than two control qubits.\textbf{ Our optimized SELECT operator only needs unitaries that are controlled on at most two qubits at a time.} The number of such controlled unitaries scales as \(O((\log N)^2)\), and after Trotterization, the number of controlled $R_Z$ and  Phaseshift gates required in total scales as $(2 + 3\log N)$ for each of such controlled unitaries. Thus, the total gate complexity of our quantum algorithm with asymptotic scaling is \(\mathcal{O}((\log N)^3)\), compared to the best classical complexity of \(\mathcal{O}(N \log N)\). Further optimization is possible; although this does not alter the asymptotic scaling, this can deliver meaningful practical speed-ups. This refinement will be discussed later.

\item \textbf{Oracle Simplification}:
We have done \textbf{further optimization by removing the additional registers} \(|k_j\rangle\) and \(|t_l\rangle\) along with associated oracles \(O_k\) and \(O_t\) of Lap-LCHS, previously necessary \cite{an2024laplacetransformbasedquantum}. Our current formulation no longer requires these additional oracles. We have also removed other unnecessary registers and oracles that were needed to create the block encoding for time evolution using QSVT. Since we are not using QSVT but trotterization, we don't need those extra registers and oracles.

\end{itemize}

Collectively, these improvements establish a QLT algorithm that achieves an superpolynomial speedup by exploiting the arithmetic-progression structure of the Laplace variable. Our approach reduces the gate complexity to $\mathcal{O}((\log N)^3)$ and surpass the classical $\mathcal{O}(N \log N)$ limit.

The remainder of this paper is organized as follows. Section 2 reviews the necessary preliminaries, including block encoding, the LCU framework, Trotterization, and the Lap-LCHS method. Section 3 presents our quantum algorithm for the Laplace transform, detailing the construction of the PREP, SELECT, and UNPREP operators. Section 4 provides the complexity analysis, including circuit width, gate scaling, and proofs of the key structural lemmas. Section 5 describes our implementation strategy and simulation on Qiskit and Pennylane. Finally, Section 6 concludes with a discussion of limitations, potential applications, and future research directions.

\section{Preliminaries}
\label{sec:preliminaries}
This section introduces the quantum algorithmic primitives necessary for understanding our proposed QLT algorithm. These include block encoding \cite{shantanav_block_encoding}, Linear Combination of Unitaries (LCU) \cite{Childs_2017_lcu}, Trotterization \cite{lloyd1996universal}, and Laplace-based Linear Combination of Hamiltonian Simulation (Lap-LCHS) \cite{an2024laplacetransformbasedquantum}.

\subsection{Block encoding}

Block encoding \cite{lin2022lecturenotesquantumalgorithms} is a technique which embeds a non-unitary matrix A into the top-left corner of a higher-dimensional unitary matrix U. The unitary U is defined as an $(\alpha, a, \epsilon)$-block encoding of A if:

\begin{equation}
    U = \begin{pmatrix}
        \frac{A}{\alpha} & * \\ * & *
    \end{pmatrix}
\end{equation}
and $||A-\alpha(\bra{0^a}U|\ket{0^a})||\leq \epsilon$, $\alpha$ is the normalization factor and $a$ is the number of ancilla qubits.
\subsection{Linear Combination of Unitaries (LCU)}
The Linear Combination of Unitaries (LCU) technique \cite{Childs_2017_lcu} provides a method to approximate a general matrix A, expressed as $A = \sum_j \alpha_j U_j$, by combining a set of unitary operations $U_j$ with corresponding weights $\alpha_j$ using control unitaries and ancilla post-selection. The LCU algorithm involves three main steps:
\begin{enumerate}
    \item \textbf{Prepare:} An ancilla state is prepared in a superposition proportional to the square roots of the normalized weights: $|\psi\rangle = \frac{1}{\lambda}\sum_j \sqrt{\alpha_j} |j\rangle$, where $\lambda$ is the normalization factor.
    \item \textbf{Select:} A controlled unitary operation $\sum_j |j\rangle \langle j| \otimes U_j$ is applied.
    \item \textbf{Uncompute/Unprep:} The ancilla state is uncomputed, and a measurement is performed to project onto the desired state.
\end{enumerate}
When A is implemented via block encoding and LCU, amplitude amplification or oblivious amplitude amplification can be employed to boost the success probability to near-deterministic levels \cite{childs2012hamiltonian, Gily_n_2019}. LCU is particularly effective when used in conjunction with block encoding to represent complex matrix functions like $A^{-1}$ or $e^{-A}$. A LCU schematic is shown in Figure \ref{fig:LCU}.

    \begin{figure}[htbp]
  \centering
  \includegraphics[width=0.7\textwidth]{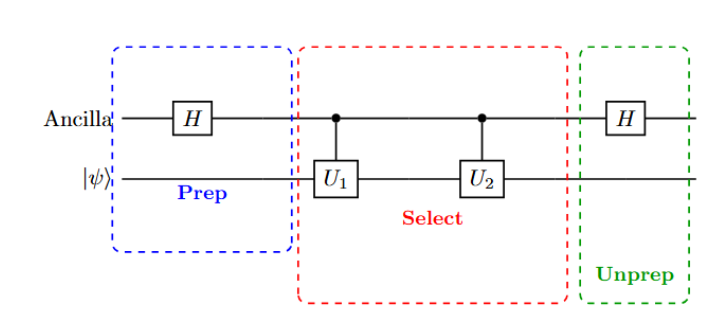}
  \caption{Linear Combination of Unitaries (LCU) circuit architecture acting on $\ket{\psi}$. The circuit is divided into three conceptual stages. In the \textit{Preparation} stage, a Hadamard gate prepares the ancilla register in a coherent superposition that encodes the weighting coefficients of the linear combination. In the \textit{Selection} stage, controlled unitaries $U_1, U_2, \ldots$ act on the system register conditioned on the ancilla state, implementing the weighted sum of operators through coherent control. In the \textit{Unpreparation} stage, is the uncomputation step. Then, a measurement is done to post-select the desired outcome.}

  \label{fig:LCU}
\end{figure}

\subsection{Trotterization}
Trotterization \cite{lin2022lecturenotesquantumalgorithms}, or the Trotter-Suzuki decomposition, is a fundamental technique in quantum simulation used to approximate the time-evolution operator $e^{-iHt}$ for a Hamiltonian H, especially when H can be decomposed into a sum of simpler parts, $H = \sum_j H_j$. This method is particularly useful when individual $H_j$ terms can be simulated efficiently, but the full Hamiltonian H cannot.

The first-order Trotter formula provides a basic approximation:
\begin{equation}
e^{-i(H_1+H_2)t} \approx \left(e^{-iH_1t/r}e^{-iH_2t/r}\right)^r
\label{eq:trotter_first_order}
\end{equation}
where $r$ denotes the number of Trotter steps. Higher-order formulas, such as the second-order symmetric Trotter-Suzuki formula, can improve accuracy with a moderate increase in cost. The approximation error generally scales as $\order{t^2/r}$ or better, depending on the commutativity of the individual Hamiltonian components, i.e., $[H_i, H_j]$. Trotterization is generally not favored for current NISQ hardware due to its high circuit complexity and depth \cite{lloyd1996universal, lin2022lecturenotesquantumalgorithms}. In our algorithm, due to the arithmetic progression structure of `s' and the commuting nature of our Hamiltonian decomposition, a single Trotter step is sufficient.

\subsection{Laplace-Based Linear Combination of Hamiltonian Simulation (Lap-LCHS)} \label{Lap-LCHS}

Lap-LCHS \cite{an2024laplacetransformbasedquantum} is the basis on which our algorithm is designed, so it is necessary to understand this completely before proceeding to understand our algorithm. Let \(A = L + iH\) be the Cartesian decomposition of the matrix $A$, with
\begin{equation}\label{eqn: lh_mats}
L = \frac{A + A^\dagger}{2}, 
\qquad
H = \frac{A - A^\dagger}{2i},
\end{equation}
and assume \(L \succeq 0\).  We wish to implement a target matrix function
\begin{equation}
h(A)
\;=\;
\int_{0}^{\infty} g(t)\,e^{-A t}\,dt,
\end{equation}
where \(g(t)\) is the inverse Laplace transform of \(h\).  By expressing
matrix exponential via a one‐dimensional LCHS integral with kernel \(f(k)\), we obtain the double‐integral representation
\begin{equation}
h(A)
\;=\;
\int_{0}^{\infty}\!\int_{-\infty}^{\infty}
f(k)\,\frac{g(t)}{1 - i k}
\;e^{-i\,t (\,k L + H\,)}\;dk\,dt.
\end{equation}
A convenient choice for \(f(k)\) is
\begin{equation}
f(k) = \frac{1}{2\pi e^{-2^\beta} e^{(1 + i k)^\beta}}, \quad \beta \in (0, 1)
\end{equation}
this function decays exponentially \cite{an2024laplacetransformbasedquantum}.

\textbf{Discretization.}
Truncate the integrals to \(k\in[-K,K]\) and \(t\in[0,T]\).  Introduce step
sizes \(h_k = 2K/M_k\), \(h_t = T/M_t\) and grid points
\begin{equation}
k_j = -K + j\,h_k,\quad j=0,\dots,M_k-1,
\qquad
t_\ell = \ell\,h_t,\quad \ell=0,\dots,M_t-1.
\end{equation}
Then approximate
\begin{equation}
h(A)
\approx
\sum_{j=0}^{M_k-1}\sum_{\ell=0}^{M_t-1}
\underbrace{\bigl(h_k\,f(k_j)/(1 - i k_j)\bigr)}_{c_j}
\;\underbrace{\bigl(h_t\,g(t_\ell)\bigr)}_{\hat c_\ell}
\;e^{\bigl(-i\,t_\ell\,(k_j L + H)\bigr)}.
\end{equation}

\textbf{Oracles.}
The algorithm requires:
\begin{itemize}
  
  \item Coefficient‐preparation unitaries
 \begin{equation}
O_{c,r}:\; \lvert0\rangle \mapsto \sum_{j=0}^{M_k-1} \sqrt{\frac{c_j}{\|c\|_1}}\,\lvert j\rangle, 
\qquad
O_{\hat c,r}:\; \lvert0\rangle \mapsto \sum_{\ell=0}^{M_t-1} \sqrt{\frac{\hat c_\ell}{\|\hat c\|_1}}\,\lvert \ell\rangle.
\end{equation}

  \item We also need the transpose of these coefficient oracles once they are represented as matrices.
  \item Simple oracles \(O_k\), \(O_t\) that map index registers \(\lvert j\rangle\),
    \(\lvert \ell\rangle\) to the binary encodings of \(k_j\) and \(t_\ell\).
\end{itemize}

\subsubsection{Lap-LCHS Algorithm}
\begin{enumerate}
    \item \textbf{Initial State Preparation:}
    \begin{equation}
    \ket{j} \ket{l} \ket{0}_j \ket{0}_l \ket{0}_R \ket{0}_{R_k} \ket{0}_{R_t} \ket{0}_a \ket{\psi}
    \end{equation}

    \item \textbf{Apply oracles } $O_k$ \textbf{ and } $O_t$ \textbf{ to compute discretization nodes:}
    \begin{equation}
    \ket{j} \ket{l} \ket{k_j}_j \ket{t_l}_l \ket{0}_R \ket{0}_{R_k} \ket{0}_{R_t} \ket{0}_a \ket{\psi}
    \end{equation}

    \item \textbf{Apply controlled rotation } $c\text{-}R$:
    \begin{equation}
    \ket{k}\ket{0} \rightarrow \ket{k} \left( \frac{\sqrt{\alpha_L k}}{\sqrt{\alpha_L |k| + \alpha_H}} \ket{0} + \frac{\sqrt{\alpha_H}}{\sqrt{\alpha_L |k| + \alpha_H}} \ket{1} \right)
    \end{equation}

    Resulting in:
    \begin{equation}
    \ket{j} \ket{l} \ket{k_j}_j \ket{t_l}_l \left( \frac{\sqrt{\alpha_L k_j}}{\sqrt{\alpha_L |k_j| + \alpha_H}} \ket{0}_R + \frac{\sqrt{\alpha_H}}{\sqrt{\alpha_L |k_j| + \alpha_H}} \ket{1}_R \right) \ket{0}_{R_k} \ket{0}_{R_t} \ket{0}_a \ket{\psi}
    \end{equation}

    \item \textbf{Apply controlled block encodings } $|0\rangle_R \langle 0|_R \otimes U_L$ and $|1\rangle_R \langle 1|_R \otimes U_H$, resulting in:
    \begin{equation}
    \begin{aligned}
    & \ket{j} \ket{l} \ket{k_j}_j \ket{t_l}_l \frac{\sqrt{\alpha_L k_j}}{\sqrt{\alpha_L |k_j| + \alpha_H}} \ket{0}_R \ket{0}_{R_k} \ket{0}_{R_t} \ket{0}_a \frac{L}{\alpha_L} \ket{\psi} \\
    & + \ket{j} \ket{l} \ket{k_j}_j \ket{t_l}_l \frac{\sqrt{\alpha_H}}{\sqrt{\alpha_L |k_j| + \alpha_H}} \ket{1}_R \ket{0}_{R_k} \ket{0}_{R_t} \ket{0}_a \frac{H}{\alpha_H} \ket{\psi} + \ket{\perp_a}
    \end{aligned}
    \end{equation}

    \item \textbf{Apply } $c\text{-}R^\dagger$:
    \begin{equation}
    \ket{j} \ket{l} \ket{k_j}_j \ket{t_l}_l \ket{0}_R \ket{0}_{R_k} \ket{0}_{R_t} \ket{0}_a \frac{k_j L + H}{\alpha_L |k_j| + \alpha_H} \ket{\psi} + \ket{\perp_{R,a}}
    \end{equation}

    \item \textbf{Apply controlled rotations on } $R_k$ \textbf{ and } $R_t$:
    \begin{equation}
    \ket{k} \ket{0} \rightarrow \ket{k} \left( \frac{\alpha_L |k| + \alpha_H}{\alpha_L K + \alpha_H} \ket{0} + \sqrt{1 - \left| \frac{\alpha_L |k| + \alpha_H}{\alpha_L K + \alpha_H} \right|^2} \ket{1} \right)
    \end{equation}
    \begin{equation}
    \ket{t} \ket{0} \rightarrow \ket{t} \left( \frac{t}{T} \ket{0} + \sqrt{1 - \left| \frac{t}{T} \right|^2} \ket{1} \right)
    \end{equation}

    Resulting in:
    \begin{equation}
    \ket{j} \ket{l} \ket{k_j}_j \ket{t_l}_l \ket{0}_R \ket{0}_{R_k} \ket{0}_{R_t} \ket{0}_a \frac{t_l (k_j L + H)}{T(\alpha_L K + \alpha_H)} \ket{\psi} + \ket{\perp_{R, R_k, R_t, a}}
    \end{equation}

    \item \textbf{Uncompute registers } $\ket{j}$ \textbf{ and } $\ket{l}$ \textbf{ using } $O_k^\dagger$ \textbf{ and } $O_t^\dagger$:
    \begin{equation}
    \ket{j} \ket{l} \ket{0}_j \ket{0}_l \ket{0}_R \ket{0}_{R_k} \ket{0}_{R_t} \ket{0}_a \frac{t_l (k_j L + H)}{T(\alpha_L K + \alpha_H)} \ket{\psi} + \ket{\perp_{R, R_k, R_t, a}}
    \end{equation}

    \item \textbf{Define full controlled block encoding operator } $U_{t(kL+H)}$:
    \begin{equation}
    (\bra{0}_{a'} \otimes I) U_{t(kL+H)} (\ket{0}_{a'} \otimes I) = \sum_{j=0}^{M_k-1} \sum_{l=0}^{M_t-1} \ket{j} \bra{j} \otimes \ket{l} \bra{l} \otimes \frac{t_l (k_j L + H)}{T(\alpha_L K + \alpha_H)}
    \end{equation}

    \item \textbf{Use this as input to the QSVT circuit for } $e^{-iT(\alpha_L K + \alpha_H)\tilde{H}}$, where:
    \begin{equation}
    \tilde{H} = \frac{t}{T} \cdot \frac{kL + H}{\alpha_L K + \alpha_H}
    \end{equation}

    \item \textbf{Construct the select oracle } SEL:
    \begin{equation}
    \text{SEL} = \sum_{j=0}^{M_k - 1} \sum_{l=0}^{M_t - 1} \ket{j} \bra{j} \otimes \ket{l} \bra{l} \otimes W_{j,l}, \quad \text{where } W_{j,l} \text{ block encodes } V_{j,l} \approx U(k_j, t_l)
    \end{equation}

    \item \textbf{Final approximation of } $h(A)$ \textbf{ via LCU:}
    \begin{equation}
    (O_{c,l}^\dagger \otimes O_{\hat{c},l}^\dagger \otimes I)\text{SEL}(O_{c,r} \otimes O_{\hat{c},r} \otimes I) = \frac{1}{\|c\|_1 \|\hat{c}\|_1} \sum_{j,l} c_j \hat{c}_l V_{j,l} \approx \frac{1}{\|c\|_1 \|\hat{c}\|_1} h(A)
    \end{equation}
\end{enumerate}

\section{Quantum algorithm for the Laplace transform}
In this section, we present our novel quantum algorithm to perform the QLT. This algorithm is built upon the Lap-LCHS framework introduced in the previous section, but with crucial modifications and optimizations specifically tailored for the QLT. 

\subsection{Quantum Laplace Transform}
\label{sec:novel_ql_algo}

The detailed steps for implementing QLT via Lap-LCHS framework are given below:

\begin{enumerate}
    \item \textbf{Initialization:} We consider a diagonal matrix A whose diagonal elements encode the discrete Laplace variable values $s_0, s_1, s_2, \ldots$. To achieve the speedup, we assume that the real components ($a_j \geq 0$) and imaginary components ($b_j$) of $s_j = a_j + i b_j$ are equal ($a_j=b_j$) and form the same arithmetic progression that is the same first term and common difference (e.g., $s_j = 1+1j$, $2+2j$, $3+3j$, etc.). As rigorously demonstrated later, this condition ensures the superpolynomial speedup. The conclusion extends straightforwardly to the more general case where $a_j \ne b_j$, provided that both sequences $\{a_j\}$ and $\{b_j\}$ independently form separate arithmetic progressions.
    \begin{equation}
    A = \begin{pmatrix}
    s_0 & 0 & 0 & 0 \\
    0 & s_1 & 0 & 0 \\
    0 & 0 & s_2 & 0 \\
    0 & 0 & 0 & s_3
    \end{pmatrix}, \text{ where } s_j = a_j + i b_j
    \label{eq:A_matrix_def}
    \end{equation}

    \item \textbf{Decomposition:} The matrix A is decomposed into its Cartesian components, $A = L + iH$, where L and H are Hermitian matrices given by:
    \begin{equation}
    \begin{split}
    L &= \frac{A+A^\dagger}{2} \\
    H &= \frac{A-A^\dagger}{2i}
    \end{split}
    \end{equation}
    Given that A is a diagonal matrix with elements $s_j = a_j + i b_j$, L and H will also be diagonal matrices with real entries:
    \begin{equation}
    L = \begin{pmatrix}
    a_0 & 0 & 0 & 0 \\
    0 & a_1 & 0 & 0 \\
    0 & 0 & a_2 & 0 \\
    0 & 0 & 0 & a_3
    \end{pmatrix}, \quad
    H = \begin{pmatrix}
    b_0 & 0 & 0 & 0 \\
    0 & b_1 & 0 & 0 \\
    0 & 0 & b_2 & 0 \\
    0 & 0 & 0 & b_3
    \end{pmatrix}
    \label{eq:L_H_matrices}
    \end{equation}

    \item \textbf{Laplace Transform based Linear Combination of Hamiltonian Simulation:}
    Now we use Lap-LCHS to do eigenvalue transformation of our diagonal matrix $A$.
    \begin{equation}
            \mathcal{}h(A) = \int_0^\infty g(t) e^{-At} \, dt = \int_0^\infty \int_{\mathbb{R}} \frac{f(k) g(t)}{1 - i k} e^{-i t (k L + H)} \, dk \, dt
    \end{equation}
    where $f(k) = \frac{1}{2\pi e^{-2^\beta} e^{(1 + i k)^\beta}}, \quad \beta \in (0, 1)$, is an asymptotically near-optimal kernel. Discretize this as:
    \begin{equation}
    \mathcal{}h(A) \approx \sum_{j=0}^{M-1} \sum_{\ell=0}^{M-1} c_j \hat{c}_\ell U(k_j, t_\ell), \quad U(k_j, t_\ell) = e^{-i t_\ell (k_j L + H)}
    \end{equation}
    $c_j$ and $\hat{c}_\ell$ are same as defined in previous Section \ref{Lap-LCHS}. $g(t)$ denotes the continuous-time function to be Laplace-transformed, while $\hat{c_l}$ represents its discretized form.

    \item \textbf{Quantum Registers:} Our algorithm employs three quantum registers, significantly reducing the quantum circuit width compared to standard Lap-LCHS by removing unnecessary auxiliary registers (we are removing $k_j,t_l$ and all other unnecessary registers except $j$, $l$, $|\psi\rangle$):
    \begin{itemize}
        \item $|\psi\rangle$: The initial input quantum state (an equal superposition state $H^{\otimes n}|0\rangle$).
        \item $|0\rangle_j$: An ancilla register for indexing over $j$ values.
        \item $|0\rangle_l$: An ancilla register for indexing over $l$ values.
    \end{itemize}
    The initial state is thus $|0\rangle_j |0\rangle_l |\psi\rangle$. 
    
    \item \textbf{Preparation: }The step involves preparing the ancilla registers with coefficients $\sqrt{c_j}$ and $\sqrt{\hat{c_l}}$ using efficient oracles $U_j$ and $U_l$ (assuming such efficient state preparation exist). $\hat{c_l}$ encodes discretized $g(t)$ at $M_t$ points and similarly $c_j$ encodes the discretized $f(k)$ at $M_k$ points. This step corresponds to the preparation step of LCU:
    \begin{equation}
    \ket{0}_j \ket{0}_\ell \ket{\psi} 
    \xrightarrow{U_{j},\,U_{\ell}} 
    \sum_{j=0}^{M_k-1} \sum_{\ell=0}^{M_t-1} \sqrt{c_j}  \sqrt{\hat{c_l}}\ket{j} \ket{\ell} \ket{\psi}/||c_j||_1 ||\hat{c_l}||_1
    \end{equation}

   \item \textbf{Apply SELECT :} This is the core computational step which is responsible for speedup in our algorithm. The SELECT operator acts as:
    \begin{equation}
    \sum_{j=0}^{M_k-1} \sum_{\ell=0}^{M_t-1} \sqrt{c_j}  \sqrt{\hat{c_l}}\ket{j} \ket{\ell} \ket{\psi}/||c_j|| ||\hat{c_l}|| \rightarrow  \sum_{j=0}^{M_k-1} \sum_{\ell=0}^{M_t-1} \sqrt{c_j}  \sqrt{\hat{c_l}} |j\rangle |\ell\rangle  U(k_j, t_\ell) |\psi\rangle/||c_j||_1 ||\hat{c_l}||_1 
    \end{equation}

    The exact form of SELECT operator is:
\begin{equation}
    \begin{split}
\text{SEL} = \prod_{a=0}^{d-1} \prod_{b=0}^{d'-1} \bigg[& |0\rangle\langle 0| \otimes |0\rangle\langle 0| \otimes I_n \\
& \quad + |0\rangle\langle 0| \otimes |1\rangle\langle 1| \otimes e^{-i h_t (-K+1)L (2^b)} \\
& \quad + |1\rangle\langle 1| \otimes |0\rangle\langle 0| \otimes I_n \\
& \quad + |1\rangle\langle 1| \otimes |1\rangle\langle 1| \\
& \qquad \otimes \begin{multlined}[t]
 (e^{-i h_t h_j L (2^{a+b})}) \\ \cdot(e^{-i h_t (-K+1)L 2^b}) \bigg]
\end{multlined}
\end{split}
\end{equation}
    Here $t_l = l h_t$, $k_j = -K + j h_j$ and $a,b$ are the indexing of each bit when $l$ and $j$ are represented in binary form. The detailed structure of the SELECT operator, particularly for the one-qubit QLT case, is presented in Sec. \ref{sec:deriving_select_form} and can be extended to multi-qubit cases.  The total number of controlled unitaries needed to be implemented are decreased from $ \mathcal {O}(N^2)$ to $\mathcal{O}\!\bigl(\log^{2} N\bigr)$ ($N$ is size of our individual ancilla registers that we are using for $j$ and $l$). We have done this by converting our discretized summation into a product of unitaries by using a clever trick (uses binary representations) from QPE \cite{lin2022lecturenotesquantumalgorithms}, which reduces the number of unitaries to be implemented. These new unitaries also eliminated the need for multi-controlled gates with more than two control qubits. Our algorithm requires multicontrolled gates only with at most two control qubits. We have also implemented each such controlled unitary using a single-step Trotterization/Product formula since $[H, L] = 0$ (both are diagonal) and their entries form same arithmetic progression, equivalent to an $\mathcal{O}(log (N_5))$ total number of such controlled $R_z$ and controlled Phaseshift gates.  Hence, the total number of gates will scale asymptotically as $\mathcal{O}(log (N_1)* log (N_2) *log (N_5))$. Taking these three registers of same size N, our gates scale as $\mathcal{O}{((log (N))^3)}$.  Hence, the $SELECT$ can be implemented very efficiently because of the diagonal, commuting structure of $L$ and $H$ and also due to their entries being in arithmetic progression form.  This is the main reason for the reduced complexity of our algorithm compared to previous works. 
    
    \item \textbf{Uncomputation:} Uncompute $|j\rangle |\ell\rangle$ registers using the transpose matrix of preparation gates (it is necessary to note that these gates need to be the transpose of the original gates in PREP, not the conjugate transpose).
    This step corresponds to the UNPREP step of LCU. The final output values of QLT at different $s$ values is stored in probabilty amplitudes of last register.
    
    \item \textbf{Circuit Diagram:}
    Below are the circuits for single-qubit QLT: one is schematic and other is explicit representation. More details of the circuit can be read from next section where we explicitly derived each operator.
\end{enumerate}
    \begin{figure}[ht]
  \centering

  \includegraphics[width=0.8\textwidth]{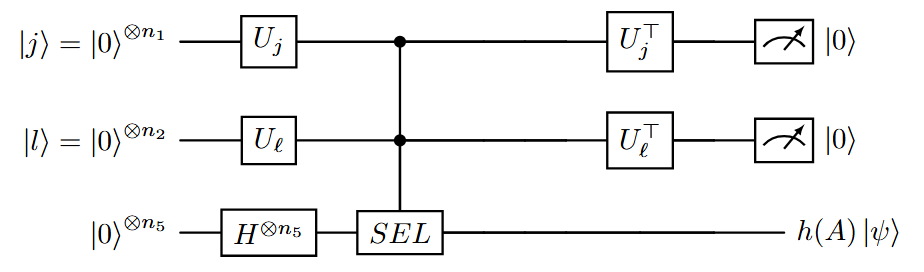}
  \caption{QLT implementation follows the standard LCU framework, consisting of the Preparation, Select, and Unpreparation operators. The target register is initialized in an equal superposition using Hadamard gates, a crucial step for our algorithm.}
  \label{fig:Circuit-for-QLT}
\end{figure}

    \begin{figure}[ht]
  \centering
  \includegraphics[width=1\textwidth]{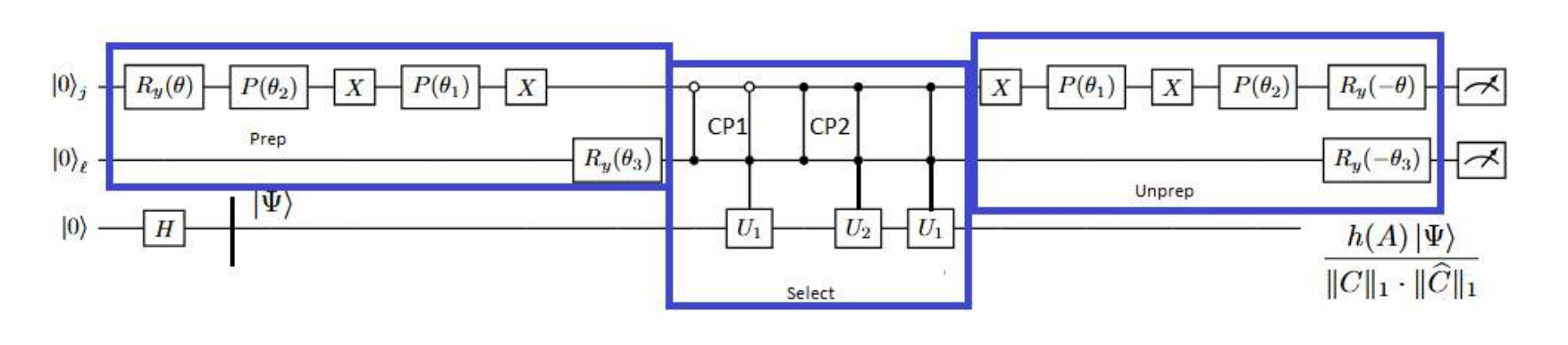}
  \caption{Explicit one-qubit realization of the QLT circuit. The diagram shows detailed decompositions of the preparation, selection, and un-preparation stages. Single-qubit rotations encode the coefficient structure of the Laplace transform, and controlled-phase as well as controlled-unitary gates implement the Select block.}

  \label{fig:one-qubit-QLT}
\end{figure}

\begin{algorithm}[H]
\caption{Quantum Laplace Transform (QLT) via Lap-LCHS (General $n$-Qubit)}
\begin{algorithmic}[1]

\Require
\Statex $n$: Number of system qubits encoding the input vector ($N = 2^n$)
\Statex $d, d'$: Number of ancilla qubits for registers $|j\rangle$ and $|l\rangle$ respectively
\Statex $|\psi\rangle$: Input quantum state on system register, $|\psi\rangle = H^{\otimes n} |0\rangle^{\otimes n}$
\Statex $C, D$: First term and common difference of the Laplace variable Arithmetic Progression ($s_j = a_j + ib_j$,
\Statex \hspace{2.4em} where $a_j = b_j$)
\Statex $K, T$: Truncation limits for the integration grid
\Statex $h_k, h_t$: Step sizes for discretization

\Ensure
\Statex $|\Psi_{\text{final}}\rangle$: State encoding the Laplace transform of the input

\Procedure{QuantumLaplaceTransform}{$|\psi\rangle, C, D, K, T, n$}

\Statex \hspace{1em} $\triangleright$ \textit{1. Initialization and Discretization Coefficients}

\State Initialize ancilla registers $|j\rangle$ and $|l\rangle$ to $|0\rangle^{\otimes d}$ and $|0\rangle^{\otimes d'}$
\State $k_j \leftarrow -K + j \cdot h_k$ for $j \in \{0, \ldots, 2^d - 1\}$
\State $t_\ell \leftarrow \ell \cdot h_t$ for $\ell \in \{0, \ldots, 2^{d'} - 1\}$
\State Calculate state-prep coefficients $c_j$ and $\hat{c}_\ell$ (encoding $f(k)$ and $g(t)$)

\Statex \hspace{1em} $\triangleright$ \textit{2. AP Pauli Decomposition for $n$ qubits (Lemma 2)}

\State $\alpha_I \leftarrow C + D\,\dfrac{2^n - 1}{2}$ \Comment{Coefficient for $I^{\otimes n}$}
\For{$m = 1$ \textbf{to} $n$}
    \State $\alpha_{Z_m} \leftarrow -D\,2^{m-2}$ \Comment{Coefficient for $Z_m$ on system qubit $m$}
\EndFor

\Statex \hspace{1em} $\triangleright$ \textit{3. LCU PREP Stage}

\State $|\Psi_0\rangle \leftarrow |0\rangle_j^{\otimes d} \otimes |0\rangle_l^{\otimes d'} \otimes |\psi\rangle$
\State $|\Psi_1\rangle \leftarrow (U_j \otimes U_l \otimes I_n)|\Psi_0\rangle$ \Comment{Encode coefficients into ancilla amplitudes}

\Statex \hspace{1em} $\triangleright$ \textit{4. LCU SELECT Stage (Optimized via Eq. 78)}

\For{$b = 0$ \textbf{to} $d' - 1$ \textbf{do}}
    \State $\theta_1 \leftarrow h_t(-K + 1)2^b$
    \State Apply C-Phase$(\theta_1 \cdot \alpha_I)$ controlled by $|l_b = 1\rangle$
    \For{$m = 1$ \textbf{to} $n$}
        \State Apply C-$R_Z(2 \cdot \theta_1 \cdot \alpha_{Z_m})$ on system qubit $m$, controlled by $|l_b = 1\rangle$
    \EndFor
    \For{$a = 0$ \textbf{to} $d - 1$ \textbf{do}}
        \State $\theta_2 \leftarrow h_t \cdot h_k \cdot 2^{a+b}$
        \State Apply CC-Phase$(\theta_2 \cdot \alpha_I)$ controlled by $|j_a = 1\rangle \otimes |l_b = 1\rangle$
        \For{$m = 1$ \textbf{to} $n$}
            \State Apply CC-$R_Z(2 \cdot \theta_2 \cdot \alpha_{Z_m})$ on system qubit $m$,
            \Statex \hspace{8em} controlled by $|j_a = 1\rangle \otimes |l_b = 1\rangle$
        \EndFor
    \EndFor
\EndFor

\Statex \hspace{1em} $\triangleright$ \textit{5. LCU UNPREP Stage}

\State $|\Psi_{\text{final}}\rangle \leftarrow (U_j^T \otimes U_l^T \otimes I_n)\,|\Psi_1\rangle$ \Comment{Apply transpose of preparation oracles}

\Statex \hspace{1em} $\triangleright$ \textit{6. Post-Selection}

\State Measure ancilla registers $|j\rangle$ and $|l\rangle$
\State Post-select on outcome $|0\rangle^{\otimes d} \otimes |0\rangle^{\otimes d'}$
\State \textbf{If} success: $|\Psi_{\text{out}}\rangle \leftarrow h(A)|\psi\rangle / (\|c\|_1 \cdot \|\hat{c}\|_1)$
\State \textbf{Else}: repeat from Step 10, or apply oblivious amplitude amplification

\State \Return $|\Psi_{\text{final}}\rangle$
\EndProcedure

\end{algorithmic}
\end{algorithm}

\section{Complexity analysis}
\label{sec:analysis}
This section rigorously analyzes the gate complexity of the QLT algorithm presented in Sec.~\ref{sec:novel_ql_algo}. We also derive the scaling of controlled unitaries, an efficient method for implementing them, and then finally the form of the overall SELECT operator, these are the core parts of our efficiency improvement. We also give the quantum circuit width complexity here.

\subsection{Circuit width complexity}
For a Laplace transform corresponding to an $N \times N$ QLT matrix, where $N = 2^{n}$ and $n$ is the number of qubits, the circuit width complexity is $\mathcal{O}(\log N)$ since all three registers individually scales as $\log N$.

\subsection{Gate Complexity}
\label{sec:gate_time_complexity}
Table \ref{tab:complexity_comparison} provides a comparative overview of the gate/time complexity for classical and quantum algorithms, specifically highlighting the benefits for Laplace transform computations when leveraging quantum computation, the structure of diagonal matrices, and the arithmetic progression structure of Laplace variable $s$.

\begin{table}[htbp]
    \centering
    \caption{Comparison of complexities for classical and quantum algorithms for QLT computations}
    \label{tab:complexity_comparison}
    \begin{tabularx}{\textwidth}{lXcc}
        \toprule
        \textbf{Matrix} & \textbf{Entry Structure} & \textbf{Classical Algo. Complexity} & \textbf{Quantum Algo. Complexity} \\
        \midrule
        Diagonal ($L=H$) & Arithmetic Progression (AP) & $\order{(M+N) \log (M+N)}$ \cite{Rabiner1969} & $\order{(\log N)^{3}}$ \\
        Diagonal ($L \ne H$) & No Structure (Non-AP) & \(\mathcal{O}((M + N)^{1.5})\) \cite{Loh2023FastDLT}
 & $\order{N^2}$ \\
        \bottomrule
    \end{tabularx}
\end{table}

This table illustrates the superpolynomial speedup achieved by the quantum algorithm, particularly when the structural properties are exploited. $M$ is number of output points and $N$ is number of input points, For comparison case take $M=N$ since for us number of input and output points are in equal numbers. We are not considering the state preparation cost which is also the case for QFT and that also assumes an efficent implementation exist to prepare the state. A similar assumption is made in Lap-LCHS~\cite{an2024laplacetransformbasedquantum}, which relies on the existence of an efficient, polylogarithmic-time state-preparation algorithm in the vector size $N$. This is justified by the Grover–Rudolph method~\cite{grover2002creatingsuperpositionscorrespondefficiently}, which constructs quantum states efficiently when the underlying function from which the input is sampled is integrable.
 Our complexity here is only for 
$SELECT$ operator which we are referring to Quantum Algorithm complexity. Including state preparation cost when we are sampling from integrable function will not change the asymptotic gate complexity. But it will affect the complexity if we are assuming an arbitrary input. A detailed proof of these complexities is provided in the following.

\subsection{Lemma 1}
\begin{lemma}
\label{lem:1}
The total number of controlled unitaries within our \textsc{select} operator scales as $\mathcal{O}((\log N)^{2})$, with each unitary being controlled by at most two qubits at a time.

\end{lemma}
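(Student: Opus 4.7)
The plan is to start from the explicit form of $U(k_j,t_\ell)=e^{-i t_\ell(k_j L+H)}$ and peel it apart using the two structural inputs we are given: the commutativity $[L,H]=0$ (both diagonal) and the arithmetic-progression form of the nodes, $t_\ell=\ell h_t$ and $k_j=-K+j h_j$. Because $L$ and $H$ commute, a single Trotter step is exact, so
\begin{equation*}
U(k_j,t_\ell)=e^{-i\ell h_t(-K+jh_j)L}\cdot e^{-i\ell h_t H}
= e^{\,i\ell h_t K L}\cdot e^{-i\,\ell j\,h_t h_j L}\cdot e^{-i\ell h_t H}.
\end{equation*}
This cleanly separates the SELECT operator into three factors that I will handle in turn.

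Next I would invoke the standard QPE-style binary-expansion trick. Writing $j=\sum_{a=0}^{d-1}j_a 2^a$ and $\ell=\sum_{b=0}^{d'-1}\ell_b 2^b$ with $d=\lceil\log M_k\rceil$, $d'=\lceil\log M_t\rceil$, each exponential factors into a product over bit positions:
\begin{align*}
e^{\,i\ell h_t KL}&=\prod_{b=0}^{d'-1}\bigl(e^{\,i h_t K L\cdot 2^b}\bigr)^{\ell_b},\\
e^{-i\ell h_t H}&=\prod_{b=0}^{d'-1}\bigl(e^{-i h_t H\cdot 2^b}\bigr)^{\ell_b},\\
e^{-i\,\ell j\,h_t h_j L}&=\prod_{a=0}^{d-1}\prod_{b=0}^{d'-1}\bigl(e^{-i h_t h_j L\cdot 2^{a+b}}\bigr)^{j_a\ell_b}.
\end{align*}
Controlling ``to the $\ell_b$ power'' is exactly a single-qubit control on the $b$-th qubit of the $|\ell\rangle$ register; controlling ``to the $j_a\ell_b$ power'' is a two-qubit control on the $a$-th qubit of $|j\rangle$ and the $b$-th qubit of $|\ell\rangle$. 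This matches the four-branch product form of SEL displayed just before the lemma statement.

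Counting the factors is then immediate: the first two products contribute $d'$ singly-controlled unitaries each, while the cross term contributes $d\cdot d'$ doubly-controlled unitaries. With the three ancilla/system registers each of size $\log N$ we have $d,d'=\Theta(\log N)$, so the total count is
\begin{equation*}
2d'+d\,d'=\mathcal{O}\bigl((\log N)^2\bigr),
\end{equation*}
and by construction no factor carries more than two control qubits. I would also briefly note that commutativity of the three bracket groups with each other is what lets me reorder the factors into this nested product at all; if $[L,H]$ were nonzero we would need Trotter steps scaling with an accuracy parameter, and if $\{a_j\}$ or $\{b_j\}$ failed to be arithmetic progressions, the binary factorization of $e^{-i\ell j\,h_t h_j L}$ into bit-pair factors would break down.

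I do not expect a genuinely hard step here; the proof is structural rather than analytic. The only real care needed is bookkeeping: making the index ranges match the displayed SEL formula (the $(-K+1)$ offset reflecting the chosen grid origin), and justifying that controlling an operator raised to a bit-valued exponent is the same as a multi-qubit controlled version of that operator. Once that is stated once, the count $\mathcal{O}((\log N)^2)$ follows by inspection.
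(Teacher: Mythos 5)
Your proof is correct and follows essentially the same route as the paper: exact factorization of $e^{-it_\ell(k_jL+H)}$ via commutativity of the diagonal matrices, followed by the QPE-style binary expansion of $j$ and $\ell$ to obtain $O(dd')=O((\log N)^2)$ factors each controlled on at most one bit of $|j\rangle$ and one bit of $|\ell\rangle$. The only cosmetic difference is that you keep the $-KL$ and $H$ contributions as separate singly-controlled factor groups, whereas the paper merges them into a single $(-K+1)L$ term using the assumption $L=H$; both groupings give the same asymptotic count and the same two-control bound.
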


\begin{proof}
The SELECT operator can be expressed as a product form by exploiting the binary representation of the indices $j$ and $l$. For $M_{k}=2^d$ for register $j$ and $M_{t}=2^{d'}$ for register $l$, we have:

\begin{equation}
\begin{split}
\text{SEL} &= \sum_{j=0}^{2^d-1} \sum_{l=0}^{2^{d'}-1} |j\rangle\langle j| \otimes |l\rangle\langle l| \otimes e^{-i t_l (k_{j}L + H)} \\
\ &= \sum_{j=0}^{2^d-1} \sum_{l=0}^{2^{d'}-1} |j\rangle\langle j| \otimes |l\rangle\langle l| \otimes e^{-il h_t ((-K+j h_j)L + H)} \\
&= \sum_{j=0}^{2^d-1} \sum_{l=0}^{2^{d'}-1} |j\rangle\langle j| \otimes |l\rangle\langle l| \otimes e^{-i l j h_t h_j L} e^{-i l h_t (-K+1)L} \\
&= \sum_{j_{d-1},...,j_{0}=0}^{1} \sum_{l_{d'-1},...,l_{0}=0}^{1} \left(\bigotimes_{a=0}^{d-1} |j_a\rangle\langle j_a|\right) \otimes \left(\bigotimes_{b=0}^{d'-1} |l_b\rangle\langle l_b|\right) \\
& \quad \otimes \left(\prod_{a=0}^{d-1} \prod_{b=0}^{d'-1} e^{-i l_b j_a h_t h_j L (2^{a+b})}\right) \left(e^{-i l_b h_t (-K+1)L (2^b)}\right)
\end{split}
\end{equation}

Here, $L = \frac{A+A^\dagger}{2}$ and $H = \frac{A-A^\dagger}{2i}$, $t_l = l h_t$, $k_j = -K + j h_j$ \cite{an2024laplacetransformbasedquantum}. The $d$ and $d'$ represent the number of qubits in the respective registers . Also, for our structure of Laplace variables 's', the real and imaginary parts are the same and hence $L=H$. The last step represents the product form enabled by the binary decomposition of $j$ and $l$. This allows us to re-express the SELECT operator as:

\begin{equation}\label{eqn: sel_op_no_reduction}
\begin{split}
\text{SEL} = \prod_{a=0}^{d-1} \prod_{b=0}^{d'-1} \bigg[& |0\rangle\langle 0| \otimes |0\rangle\langle 0| \otimes I_n \\
& \quad + |0\rangle\langle 0| \otimes |1\rangle\langle 1| \otimes e^{-i h_t (-K+1)L (2^b)} \\
& \quad + |1\rangle\langle 1| \otimes |0\rangle\langle 0| \otimes I_n \\
& \quad + |1\rangle\langle 1| \otimes |1\rangle\langle 1| \\
& \qquad \otimes \begin{multlined}[t]
 (e^{-i h_t h_j L (2^{a+b})}) \\ \cdot(e^{-i h_t (-K+1)L 2^b}) \bigg]
\end{multlined}
\end{split}
\end{equation}
The $\prod$ symbol is used in a slightly unconventional way here, denoting a tensor product on the first two (ancilla) registers and an ordinary matrix product on the third (target) register \cite{lin2022lecturenotesquantumalgorithms}.

This product form clearly shows that each term inside the large bracket corresponds to a controlled unitary where the controls are on at most two qubits (one from the $j$ register and one from the $l$ register). The total number of such product terms is $(d) \times (d')$. For us $d = d' = \log N$, then the number of controlled unitaries scales as $\order{(\log N)^2}$.
\end{proof}

\subsection{Lemma 2} \label{lemma: 2}

\begin{lemma}
Let $D$ be the diagonal matrix of size $2^n \times 2^n$ defined by
\begin{equation}
\label{eq:D-def}
D \;=\; \sum_{x=0}^{2^n-1} (a + d x)\, |x\rangle\langle x| ,
\end{equation}
where $n \in \mathbb{N}$, $a,d \in \mathbb{R}$ and $x$ is identified with its binary expansion
\begin{equation}
\label{eq:x-binary}
x \;=\; \sum_{j=1}^n 2^{j-1} x_j, 
\qquad x_j \in \{0,1\}.
\end{equation}
Consider the Pauli expansion
\begin{equation}
\label{eq:Pauli-expansion}
D \;=\; \sum_{P \in \mathcal{P}_n} \alpha_P P,
\end{equation}
where $\mathcal{P}_n$ is the $n$-qubit Pauli group and
\begin{equation}
\label{eq:alpha-def}
\alpha_P 
\;=\; \frac{1}{2^n} \sum_{x=0}^{2^n-1} (a + d x) \langle x | P | x \rangle.
\end{equation}
Then the only Pauli strings $P$ with nonzero coefficients $\alpha_P$ are 
the identity $I^{\otimes n}$ and the single-qubit $Z$ strings
\begin{equation}
\label{eq:Zj-def}
Z_j \;\equiv\; I^{\otimes (j-1)} \otimes Z \otimes I^{\otimes (n-j)},
\qquad 1 \leq j \leq n.
\end{equation}
\end{lemma}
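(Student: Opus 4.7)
The plan is to exploit two structural facts: diagonal matrices have Pauli support only on $I$/$Z$ strings, and the arithmetic-progression entries $a+dx$ form a polynomial of degree at most one in the bits of $x$, so their Walsh--Hadamard expansion lives on Hamming weight at most one.

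First, I would note that since $D$ in \eqref{eq:D-def} is diagonal, $\langle x|P|x\rangle=0$ in \eqref{eq:alpha-def} whenever $P$ contains any $X$ or $Y$ factor, because those Paulis have vanishing diagonal entries. Hence only subsets $S\subseteq\{1,\dots,n\}$ with $P=Z_S:=\bigotimes_{j\in S}Z_j$ can contribute, and on the computational basis $\langle x|Z_S|x\rangle=\prod_{j\in S}(-1)^{x_j}$ depends on $x$ only through the bits in \eqref{eq:x-binary}.

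Next, I would split $\alpha_{Z_S}$ according to $a+dx$ into a constant piece and a linear piece. The constant piece reduces to $\frac{a}{2^n}\sum_x\prod_{j\in S}(-1)^{x_j}$, and by factorizing the sum over qubits as $\prod_{i=1}^n\sum_{x_i\in\{0,1\}}(-1)^{x_i\mathbf{1}[i\in S]}$ this evaluates to $a$ when $S=\emptyset$ and to $0$ otherwise. For the linear piece, I insert \eqref{eq:x-binary} and interchange sums to obtain $\frac{d}{2^n}\sum_{k=1}^n 2^{k-1}\sum_x x_k\prod_{j\in S}(-1)^{x_j}$. The inner sum again factorizes qubit-by-qubit: the factor at an index $i\ne k$ equals $2$ if $i\notin S$ and $0$ if $i\in S$, whereas the factor at index $k$ equals $1$ if $k\notin S$ and $-1$ if $k\in S$. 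Consequently the $k$-th term survives only when $S\setminus\{k\}=\emptyset$, i.e.\ when $S=\emptyset$ or $S=\{k\}$.

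Combining the two pieces, $\alpha_{Z_S}\ne 0$ only for $S=\emptyset$, which produces the identity term, and for $S=\{k\}$ with $k\in\{1,\dots,n\}$, which produces the single-qubit $Z_k$ term in \eqref{eq:Zj-def}, exactly as claimed. I do not expect a genuine obstacle: the argument is just a Walsh transform of the affine function $x\mapsto a+dx$, and the arithmetic-progression hypothesis is what bounds its bit-polynomial degree by one and therefore confines the transform to Hamming weight $\le 1$. The only bookkeeping to keep straight is the convention in \eqref{eq:x-binary} that bit $j$ carries weight $2^{j-1}$, and the sign convention $Z|x_j\rangle=(-1)^{x_j}|x_j\rangle$.
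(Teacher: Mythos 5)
Your proposal is correct and follows essentially the same route as the paper's proof: both discard $X$/$Y$ strings via vanishing diagonals, split $a+dx$ into its constant and bit-linear parts, and then factorize the resulting sums qubit-by-qubit (the paper phrases this factorization as independence of Bernoulli$(1/2)$ bits, which is the same computation). Your surviving terms and their values, e.g.\ $-d\,2^{k-2}$ for $S=\{k\}$, match the paper's \eqref{eq:alpha-Zj-final}.
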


\begin{proof}
First, observe that if $P$ contains any factor $X$ or $Y$, then
\begin{equation}
\label{eq:XY-diagonal-zero}
\langle x | P | x \rangle = 0
\quad \text{for all computational basis states } |x\rangle,
\end{equation}
because $X$ and $Y$ flip the computational basis states. Hence, from
\eqref{eq:alpha-def} and \eqref{eq:XY-diagonal-zero} we get
\begin{equation}
\label{eq:alpha-zero-XY}
\alpha_P = 0
\quad \text{if $P$ contains any $X$ or $Y$ factor}.
\end{equation}
Therefore, it suffices to consider Pauli strings $P$ that are tensor
products of only $I$ and $Z$ operators.

Let $Z_P \subseteq \{1,\dots,n\}$ denote the set of positions where $P$
has a $Z$, and define
\begin{equation}
\label{eq:r-def}
r := |Z_P|.
\end{equation}
For $x = (x_1,\dots,x_n)$ with integer value given by \eqref{eq:x-binary},
we have
\begin{equation}
\label{eq:sx-def}
\langle x | P | x \rangle 
= (-1)^{\sum_{j \in Z_P} x_j}
=: s_x.
\end{equation}
Substituting \eqref{eq:sx-def} into \eqref{eq:alpha-def}, we obtain
\begin{equation}
\label{eq:alpha-as-sum}
\alpha_P 
= \frac{1}{2^n} \sum_{x=0}^{2^n-1} (a + d x) s_x.
\end{equation}

We now interpret \eqref{eq:alpha-as-sum} as an expectation value over
a product distribution. Under the uniform measure on 
$\{0,1\}^n$, each bit $x_k$ is i.i.d.\ Bernoulli$(1/2)$. Using the
binary expansion \eqref{eq:x-binary}, we rewrite $x$ as
\begin{equation}
\label{eq:x-ckxk}
x 
= \sum_{k=1}^n c_k x_k,
\qquad 
c_k := 2^{k-1}.
\end{equation}
Thus, \eqref{eq:alpha-as-sum} becomes
\begin{equation}
\label{eq:alpha-expectation}
\alpha_P 
= \mathbb{E}\!\left[\bigl(a + d x\bigr) s_x\right]
= \mathbb{E}\!\left[\left(a + d \sum_{k=1}^n c_k x_k\right) s_x\right],
\end{equation}
where the expectation is over the random vector 
$(x_1,\dots,x_n)$ with i.i.d.\ Bernoulli$(1/2)$ components.

By linearity of expectation, \eqref{eq:alpha-expectation} can be expanded as
\begin{equation}
\label{eq:alpha-expanded}
\alpha_P
= a\,\mathbb{E}[s_x]
+ d \sum_{k=1}^n c_k\,\mathbb{E}[x_k s_x].
\end{equation}

\subsubsection*{Computation of $\mathbb{E}[s_x]$}

From \eqref{eq:sx-def}, $s_x$ depends only on the bits $\{x_j : j \in Z_P\}$:
\begin{equation}
\label{eq:sx-product}
s_x = \prod_{j \in Z_P} (-1)^{x_j}.
\end{equation}
Since the bits $\{x_j\}$ are independent and each $x_j$ is Bernoulli$(1/2)$,
we have
\begin{equation}
\label{eq:Exjminus}
\mathbb{E}\bigl[(-1)^{x_j}\bigr]
= \frac{1}{2} \bigl( (-1)^0 + (-1)^1 \bigr)
= \frac{1}{2}(1 - 1)
= 0.
\end{equation}
Therefore, for $r = |Z_P|$ defined in \eqref{eq:r-def},
\eqref{eq:sx-product} and \eqref{eq:Exjminus} give
\begin{equation}
\label{eq:Esx-cases}
\mathbb{E}[s_x]
= \prod_{j \in Z_P} \mathbb{E}\bigl[(-1)^{x_j}\bigr]
= 
\begin{cases}
1, & r = 0, \\[4pt]
0, & r \geq 1,
\end{cases}
\end{equation}
where the case $r=0$ corresponds to the empty product, which equals $1$.

\subsubsection*{Computation of $\mathbb{E}[x_k s_x]$}

We next compute $\mathbb{E}[x_k s_x]$ for $1 \leq k \leq n$, distinguishing
two cases depending on whether $k \in Z_P$ or $k \notin Z_P$.

\paragraph{Case 1: $k \notin Z_P$.}

Here $x_k$ is independent of $s_x$ (because $s_x$ depends only on 
$\{x_j : j \in Z_P\}$). Thus,
\begin{equation}
\label{eq:Exksx-factor}
\mathbb{E}[x_k s_x]
= \mathbb{E}[x_k] \, \mathbb{E}[s_x].
\end{equation}
Since $x_k$ is Bernoulli$(1/2)$, we have
\begin{equation}
\label{eq:Exk-half}
\mathbb{E}[x_k]
= \frac{1}{2}.
\end{equation}
Combining \eqref{eq:Esx-cases}, \eqref{eq:Exksx-factor}, and 
\eqref{eq:Exk-half} yields
\begin{equation}
\label{eq:Exksx-k-notin-Z}
\mathbb{E}[x_k s_x]
= \mathbb{E}[x_k] \mathbb{E}[s_x] =
\begin{cases}
\frac{1}{2}, & r = 0, \\[4pt]
0, & r \geq 1,
\end{cases}
\quad \text{for } k \notin Z_P.
\end{equation}

\paragraph{Case 2: $k \in Z_P$.}

Write
\begin{equation}
\label{eq:sx-split}
s_x = (-1)^{x_k} s_x',
\end{equation}
where
\begin{equation}
\label{eq:sx-prime-def}
s_x' := \prod_{j \in Z_P \setminus \{k\}} (-1)^{x_j}
\end{equation}
depends only on the bits $\{x_j : j \in Z_P \setminus \{k\}\}$ and is
independent of $x_k$. Using \eqref{eq:sx-split}, we have
\begin{equation}
\label{eq:Exksx-split}
\mathbb{E}[x_k s_x]
= \mathbb{E}\bigl[x_k (-1)^{x_k} s_x'\bigr]
= \mathbb{E}\bigl[x_k (-1)^{x_k}\bigr] \, \mathbb{E}[s_x'].
\end{equation}
Analogously to \eqref{eq:Esx-cases}, but now with $r-1$ factors, we get
\begin{equation}
\label{eq:Esxprime-cases}
\mathbb{E}[s_x']
=
\begin{cases}
1, & r-1 = 0 \text{ (i.e.\ } r = 1), \\[4pt]
0, & r-1 \geq 1 \text{ (i.e.\ } r \geq 2).
\end{cases}
\end{equation}
It remains to compute $\mathbb{E}\bigl[x_k (-1)^{x_k}\bigr]$. Since $x_k$ is 
Bernoulli$(1/2)$,
\begin{equation}
\label{eq:Exkminusxk}
\mathbb{E}\bigl[x_k (-1)^{x_k}\bigr]
= \frac{1}{2} \bigl( 0 \cdot (-1)^0 + 1 \cdot (-1)^1 \bigr)
= \frac{1}{2}(0 - 1)
= -\frac{1}{2}.
\end{equation}
Combining \eqref{eq:Exksx-split}, \eqref{eq:Esxprime-cases}, and 
\eqref{eq:Exkminusxk}, we obtain
\begin{equation}
\label{eq:Exksx-k-in-Z}
\mathbb{E}[x_k s_x]
=
\begin{cases}
-\dfrac{1}{2}, & r = 1, \\[6pt]
0, & r \geq 2,
\end{cases}
\quad \text{for } k \in Z_P.
\end{equation}

\subsubsection*{Summary of $\mathbb{E}[x_k s_x]$}

Equations \eqref{eq:Exksx-k-notin-Z} and \eqref{eq:Exksx-k-in-Z} together
give, for each $k$,
\begin{equation}
\label{eq:Exksx-summary}
\mathbb{E}[x_k s_x]
=
\begin{cases}
\frac{1}{2}, & r = 0, \\[4pt]
-\dfrac{1}{2}, & r = 1 \text{ and } k \in Z_P, \\[6pt]
0, & \text{otherwise}.
\end{cases}
\end{equation}

\subsubsection*{Evaluation of $\alpha_P$ in the three regimes}

We now substitute \eqref{eq:Esx-cases} and \eqref{eq:Exksx-summary} into 
\eqref{eq:alpha-expanded} and analyze the three cases $r = 0$, $r = 1$, 
and $r \geq 2$.

\paragraph{Case A: $r = 0$ (the identity).}

If $Z_P = \varnothing$, then $P = I^{\otimes n}$ and $s_x = 1$ for all $x$.
From \eqref{eq:Esx-cases} we have $\mathbb{E}[s_x] = 1$, and from
\eqref{eq:Exksx-summary} we have $\mathbb{E}[x_k s_x] = 1/2$ for all $k$.
Hence, \eqref{eq:alpha-expanded} gives
\begin{equation}
\label{eq:alpha-identity-raw}
\alpha_{I^{\otimes n}}
= a \cdot 1 + d \sum_{k=1}^n c_k \cdot \frac{1}{2}.
\end{equation}
Using $c_k = 2^{k-1}$ from \eqref{eq:x-ckxk}, we compute
\begin{equation}
\label{eq:sum-ck}
\sum_{k=1}^n c_k 
= \sum_{k=1}^n 2^{k-1}
= 2^n - 1.
\end{equation}
Substituting \eqref{eq:sum-ck} into \eqref{eq:alpha-identity-raw}, we obtain
\begin{equation}
\label{eq:alpha-identity-final}
\alpha_{I^{\otimes n}}
= a + d \cdot \frac{2^n - 1}{2}.
\end{equation}

\paragraph{Case B: $r = 1$ (single $Z$).}

Suppose $Z_P = \{j\}$ for some $1 \leq j \leq n$. Then $P = Z_j$ as in 
\eqref{eq:Zj-def}. From \eqref{eq:Esx-cases} we have $\mathbb{E}[s_x] = 0$. 
From \eqref{eq:Exksx-summary}, 
\begin{equation}
\label{eq:Exksx-r1}
\mathbb{E}[x_k s_x]
=
\begin{cases}
-\dfrac{1}{2}, & k = j, \\[6pt]
0, & k \neq j.
\end{cases}
\end{equation}
Therefore, \eqref{eq:alpha-expanded} gives
\begin{equation}
\label{eq:alpha-Zj-raw}
\alpha_{Z_j}
= a \cdot 0 + d \sum_{k=1}^n c_k\,\mathbb{E}[x_k s_x]
= d \, c_j \, \mathbb{E}[x_j s_x]
= d \cdot 2^{j-1} \cdot \left(-\frac{1}{2}\right).
\end{equation}
Hence,
\begin{equation}
\label{eq:alpha-Zj-final}
\alpha_{Z_j}
= -d \, 2^{j-2}.
\end{equation}

\paragraph{Case C: $r \geq 2$ (two or more $Z$ factors).}

If $|Z_P| = r \geq 2$, then from \eqref{eq:Esx-cases} we have 
$\mathbb{E}[s_x] = 0$, and from \eqref{eq:Exksx-summary} we have 
$\mathbb{E}[x_k s_x] = 0$ for all $k$. Substituting into 
\eqref{eq:alpha-expanded} yields
\begin{equation}
\label{eq:alpha-rge2}
\alpha_P
= a \cdot 0 + d \sum_{k=1}^n c_k \cdot 0
= 0.
\end{equation}

Combining \eqref{eq:alpha-zero-XY}, \eqref{eq:alpha-identity-final},
\eqref{eq:alpha-Zj-final}, and \eqref{eq:alpha-rge2}, we conclude:
\begin{itemize}
    \item If $P$ contains any $X$ or $Y$ factor, then $\alpha_P = 0$ by
    \eqref{eq:alpha-zero-XY}.
    \item If $P$ is a tensor product of only $I$ and $Z$ operators and
    has $r \geq 2$ $Z$ factors, then $\alpha_P = 0$ by
    \eqref{eq:alpha-rge2}.
    \item For $P = I^{\otimes n}$, the coefficient $\alpha_{I^{\otimes n}}$
    is given by \eqref{eq:alpha-identity-final}, which is generally
    nonzero.
    \item For $P = Z_j$ with $1 \leq j \leq n$, the coefficient
    $\alpha_{Z_j}$ is given by \eqref{eq:alpha-Zj-final}, which is
    also generally nonzero (unless $d=0$).
\end{itemize}
Thus, in the Pauli expansion \eqref{eq:Pauli-expansion} of $D$, the only
Pauli strings with nonzero coefficients are the identity $I^{\otimes n}$
and the single-qubit $Z$ strings $Z_j$, $1 \leq j \leq n$.
\end{proof}

\subsection{Deriving the form of SELECT operator using single-step Trotterization/Product formula}
\label{sec:deriving_select_form}
From Lemma 1, the final SELECT operator can be expanded into a product of terms, each acting on the target register, controlled by at most two ancilla qubits (one from $|j\rangle$ and one from $|l\rangle$). The terms in these products involve exponentials of L and H. Since L and H are diagonal matrices with entries forming an arithmetic progression, we can use Lemma 2 to simplify their form.

Let's consider a $2 \times 2$ diagonal matrix L (for simplicity; this extends to $2^n \times 2^n$ by Lemma 2) with entries in an arithmetic progression. It can be written as:
\begin{equation}
L = \begin{pmatrix} C & 0 \\ 0 & C+D \end{pmatrix}=H
\end{equation}
where C is the first term of the arithmetic progression and D is the common difference.
According to Lemma 2, such a matrix can be decomposed into a sum of Pauli strings consisting of only $I$ and $Z$ terms, with non-zero coefficients only for $I$ and single $Z_j$ operators. For a $2 \times 2$ matrix, this means:
\begin{equation}
L = C_1 I + C_2 Z
\end{equation}
where $I = \begin{pmatrix} 1 & 0 \\ 0 & 1 \end{pmatrix}$ and $Z = \begin{pmatrix} 1 & 0 \\ 0 & -1 \end{pmatrix}$.
The coefficients $C_1$ and $C_2$ are derived as follows (for a $2 \times 2$ matrix, $n=1$ using $lemma: 2$'s proof):
\begin{equation}
\begin{split}
C_1 &= \frac{1}{2^1} \text{Tr}(I L) = \frac{1}{2} (C + C+D) = \frac{1}{2}(2C+D) \\
C_2 &= \frac{1}{2^1} \text{Tr}(Z L) = \frac{1}{2} (C - (C+D)) = \frac{-D}{2}
\end{split}
\end{equation}
Since L and H are both diagonal matrices(from Eq.(\ref{eqn: lh_mats})), they commute ($[H,L]=0$).
\begin{corollary}
Due to this, the Pauli terms in the decomposition commute with each other and hence we can use a single-step (Trotter)product formula.
\end{corollary}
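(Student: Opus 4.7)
The plan is to combine the Pauli expansion guaranteed by Lemma~\ref{lem:1}'s successor (Lemma 2) with the elementary fact that Pauli strings built only from $I$ and $Z$ are mutually commuting, and then invoke the exact identity $e^{A+B}=e^A e^B$ whenever $[A,B]=0$. This reduces the corollary to essentially a one-line algebraic observation, so I would structure the proof to make the dependency on Lemma 2 explicit and then discharge the rest by direct computation.

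First I would apply Lemma 2 separately to $L$ and $H$ to obtain
\begin{equation}
L \;=\; \alpha_0\, I^{\otimes n} \;+\; \sum_{j=1}^{n} \alpha_j\, Z_j, \qquad
H \;=\; \beta_0\, I^{\otimes n} \;+\; \sum_{j=1}^{n} \beta_j\, Z_j,
\end{equation}
where the coefficients are those prescribed by Lemma 2 for the two underlying arithmetic progressions (in our setting $\alpha_j=\beta_j$, but the argument does not use this). Next, I would check pairwise commutativity of every operator in the set $\{I^{\otimes n},Z_1,\ldots,Z_n\}$: the identity commutes with everything, while $Z_i$ and $Z_j$ commute because for $i\neq j$ they act non-trivially on disjoint tensor slots and for $i=j$ the claim is trivial. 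Consequently, every pair of terms in the two expansions commutes, even across $L$ and $H$, recovering $[L,H]=0$ as a consistency check.

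Finally I would lift this commutativity to the exponentials appearing in the SELECT operator from Lemma 1. Each factor has the form $e^{-i\tau M}$ with $M\in\{L,\,H,\,k_jL+H\}$, which is itself a linear combination of mutually commuting Pauli terms; the exact identity $e^{X+Y}=e^X e^Y$ for commuting $X,Y$ therefore yields
\begin{equation}
e^{-i\tau L} \;=\; e^{-i\tau\alpha_0 I^{\otimes n}} \prod_{j=1}^{n} e^{-i\tau\alpha_j Z_j},
\end{equation}
with no Trotter error whatsoever. This is precisely the statement that a single-step product formula is exact, and each $e^{-i\tau\alpha_j Z_j}$ is a single-qubit $R_z$ rotation while the identity contribution implements a (controlled) phase shift, matching the gate count quoted in the preceding discussion.

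The main obstacle is not mathematical but expository: there is no real technical difficulty once Lemma 2 is in hand, since the commutativity of $Z$-only Pauli strings is immediate from the Pauli algebra. The substance of the result lives entirely inside Lemma 2, which forbids $X$ and $Y$ factors as well as multi-qubit $Z$ products; the task of the corollary's proof is simply to record that this structural fact makes first-order Trotterization exact, and to spell out that this is what enables the later $\mathcal{O}((\log N)^3)$ scaling.
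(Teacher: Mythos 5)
Your proof is correct and follows essentially the same route as the paper: the paper likewise argues that $L$ and $H$ are diagonal (hence commuting), that Lemma~2 restricts the Pauli expansion to $I^{\otimes n}$ and single-qubit $Z_j$ terms, and that the exact identity $e^{-i\theta(C_1 I + C_2 Z)} = e^{-i\theta C_1 I}\,e^{-i\theta C_2 Z}$ for commuting summands makes the single-step product formula exact, with the identity part becoming a (controlled) phase and the $Z$ parts becoming $R_Z$ rotations. Your write-up simply makes explicit the pairwise commutativity of $\{I^{\otimes n}, Z_1,\dots,Z_n\}$ that the paper leaves implicit.
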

  For example, if we simplify by taking $C=1$ and $D=1$ $\left(so\,\, L = \begin{pmatrix} 1 & 0 \\ 0 & 2 \end{pmatrix}\right)$, then $C_1 = 3/2$ and $C_2 = -1/2$. The exponential $e^{-i\theta L}$ becomes $e^{-i\theta(C_1I+C_2Z)} = e^{-i\theta C_1 I} e^{-i\theta C_2 Z}$ here $\theta \in \mathbb{R}$
.The term $e^{-i\theta C_1 I}$ is a global phase, and $e^{-i\theta C_2 Z}$ is a simple $R_Z$ rotation on the relevant qubit.

Each term in the SELECT product form (from Lemma 1) is a controlled version of such an exponential which translates to:
\begin{itemize}
    \item A controlled phase gate for the $e^{-i \theta \cdot C_1 I}$ part, controlled by the qubits $j_a$ and $l_b$.
    \item A controlled $R_Z$ gate for the $e^{-i \theta \cdot C_2 Z}$ part, also controlled by $j_a$ and $l_b$.
\end{itemize}

\begin{equation}\label{eqn: sel_operator_final}
\begin{split}
\text{SEL} &= \prod_{a=0}^{d-1}\prod_{b=0}^{d'-1}\left(
\begin{aligned}
&|0\rangle\langle 0| \otimes |0\rangle\langle 0| \otimes I_n + 
|0\rangle\langle 0| \otimes |1\rangle\langle 1| \otimes 
\underbrace{e^{-ih_t(-K+1)2^{b} \frac{1}{2}(2C+D)}}_{\text{P1}} \\[2pt]
&\quad \times 
\underbrace{e^{-ih_t(-K+1)2^{b} (-D/2)Z}}_{\text{U1}} + 
|1\rangle\langle 1| \otimes |0\rangle\langle 0| \otimes I_n \\[4pt]
&+ |1\rangle\langle 1| \otimes |1\rangle\langle 1| \otimes
  \Bigl(
    \underbrace{%
      e^{-ih_t h_j (2)^{a+b} \frac{1}{2}(2C+D)}
      \,e^{-ih_t (-K+1)2^{b} \frac{1}{2}(2C+D)}%
    }_{\text{P2}}
  \Bigr) \\[2pt]
&\quad \times 
\left( \underbrace{e^{-ih_t h_j (2)^{a+b} (-D/2)Z}}_{\text{U2}} \right) \\[2pt]
&\quad \times 
\left( \underbrace{e^{-ih_t (-K+1)2^{b} (-D/2)Z}}_{\text{U1}} \right)
\end{aligned}\right)
\end{split}
\end{equation}

Here for the single qubit case that we are proving for $d=d'=1$. The complexity of implementing such controlled $R_Z$ that are $U_1$ and $U_2$  gates for single qubits is constant. The extra phases are implemented using controlled phase shift gates that is $P1$ and $P2$. The structure of the SELECT operator, derived from Lemma 1, is a product of $d \times d'$ terms. Each such term requires implementing a controlled operation where the controls are on specific qubits from the $j$ and $l$ registers, and the target is the system register.
For each of the $\order{(\log N)^2}$ controlled unitaries, implementation with a single Trotter step involves controlled $R_Z$ gates and controlled phase shift gates. The number of such gates required per term scales as $\order{\log N}$.

Thus, the total gate complexity is:
\begin{equation*}
\text{Total Gates} = \text{Number of terms} \times \text{Gates per term}
\end{equation*}
If $d = d' = \log N$ (for ancilla registers $j$ and $l$), and the target system register has $n_{sys} = \log N$ qubits, then:
\begin{align*}
\text{Total Gates} &= \order{(\log N)^2} \times \order{\log N} \\
&= \order{(\log N)^3}
\end{align*}
This confirms the stated asymptotic scaling of $\order{(\log N)^3}$ for the QLT algorithm.

\section{Implementation and simulation results}
\label{sec:results}
Here, we detail both the classical (numerically) and quantum circuit implementations of our QLT algorithm. Subsequently, we validate our approach by comparing the results obtained from both the implementations.

\subsection{Numerical Lap-LCHS implementation and its comparison with exact
analytical Laplace Transform}

Under our QLT constraints, we initially simulated the discretized Lap-LCHS integral, specifically using an arithmetic progression structure for the Laplace variable \textit{s} and by also using the diagonal matrix structure of matrix $A$. The numerical implementation was done in \verb|NumPy|. The reference used for verification is the exact analytical form of the Laplace Transform for a chosen function, obtained from standard lookup tables.\newline
In this comparison, we evaluate both the numerical and analytical forms at various points, computing the absolute difference at each point, and then obtaining the sum of the differences, to quantify the overall error. From our observations, it is evident that this sum of absolute differences rapidly converges to zero as the number of sample(discretization) points increases. Figure \ref{fig: exp_exp_sin_lt} illustrates the convergence for the input functions $e^{-0.9t}$ and, $e^{-0.9t}sin(t)$. Here, the Y-axis represents the sum of absolute errors, while the X-axis represents $log_2(M_k)$, with $M_t$ implicitly doubling at each step. We can observe this fast convergence across different input functions. From the results obtained, to obtain double-digit precision, we deduce that approximately eight qubits are required in both the $M_k$ and $M_t$ registers, which correspond to the $j$ and $l$ registers in our algorithm.

\begin{figure}[h]
    \centering
    \includegraphics[width=1.0\linewidth]{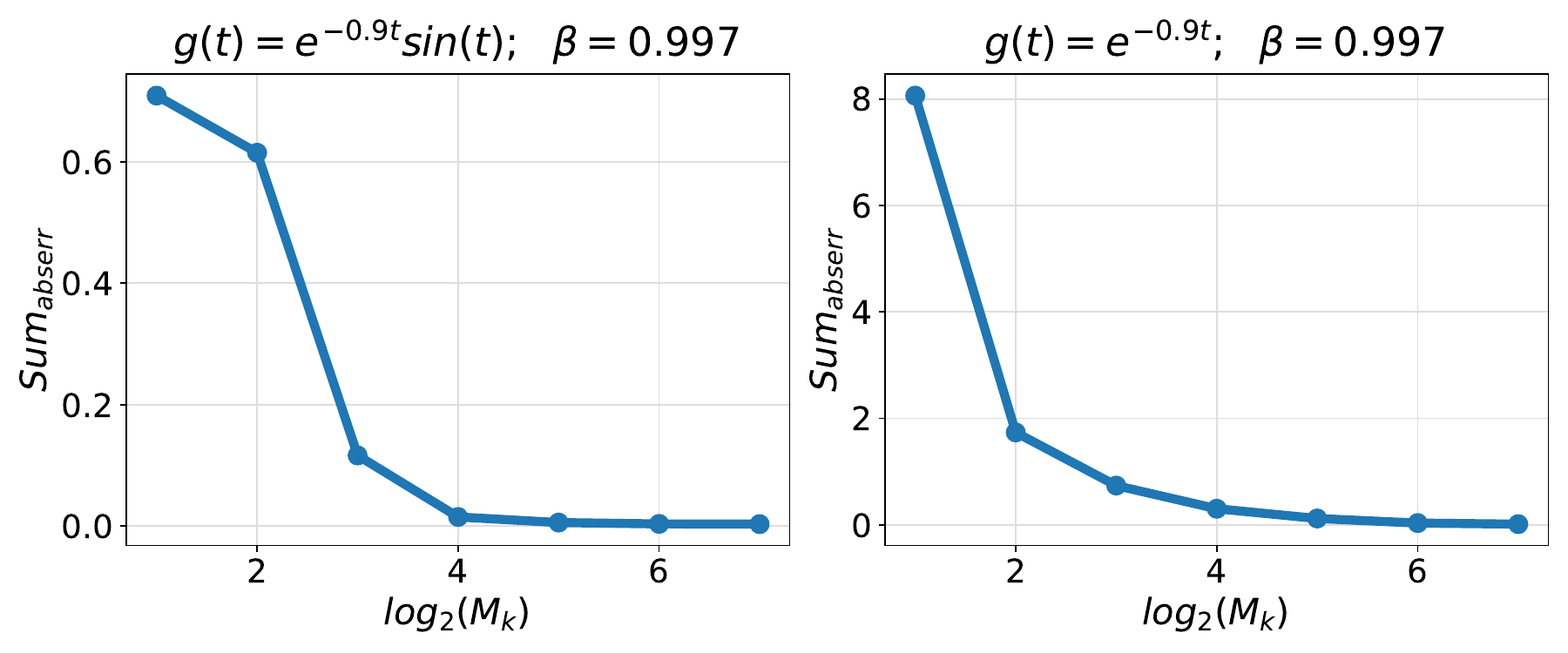}
    \caption{Comparison between numerical (classical) lap-LCHS calculation using numpy with analytical Laplace transform calculation for (a) $g(t) = e^{-0.9} sin(t)$ and (b) $g(t) = e^{-0.9t}$. }
    \label{fig: exp_exp_sin_lt}
\end{figure}

\subsection{QLT implementation on Qiskit and Pennylane}
We perform the quantum circuit implementation in three stages. First, we implement the real-weighted LCU, next, the complex-weighted LCU. Having shown the implementations of the both, we combine them both (with additional global phases) and perform the QLT circuit, for a single qubit in \verb|PennyLane v0.41.1|~\cite{bergholm2022_pennylane}.
Next, we proceed to perform a generalized QLT implementation which works for arbitrary number of qubits this is done to get desired precision.
We work with \verb|Qiskit v1.4.2|~\cite{javadiabhari2024quantumcomputingqiskit} and Pennylane for implementation of the QLT algorithm, where qiskit provides ease of access and wide reachability, whereas Pennylane offers better classical data encoding.

\subsubsection{Real weighted LCU}
We have first implemented a real-weighted LCU using two arbitrary unitary operations, $R_Y (\theta_1)$ and $R_X (\theta_2)$, with assigned weights $c_1$ and $c_2$, the SELECT operator is written as:

\begin{equation}
    SELECT = \ket{0}\bra{0} \otimes R_Y (\theta_1) + \ket{1}\bra{1} \otimes R_X(\theta_2)
\end{equation}

For the PREP part, we know the weights are real valued probability amplitudes, hence we use $R_Y (\theta)$ gates (This is a naive encoding method, please refer to Sec.17.2 (Fig 9) in \cite{Dalzell_2025} for further details).

\begin{equation}
   UNPREP-SELECT-PREP = c_1 \ket{0}\bra{0} \otimes R_Y(\theta_1) + c_2 \ket{1}\bra{1} \otimes R_X(\theta_2)
\end{equation}

The formula mentioned above is directly applicable for positive weights and trivially extensible to negative weights via a sign change. The UNPREP operation is simply the transpose of PREP, which is achieved by negating the angle of the $R_Y$ gates. 

As an example to encode two values a and b into a single qubit (assuming $|a|^2 + |b|^2 = 1$), the angle used for encoding is $\gamma = 2\,arccos\left( \frac{a}{\sqrt{a^2 +  b^2}} \right)$ and hence the $R_Y$ gate becomes the following:
\begin{equation}
    R_Y\left(\gamma\right) = \begin{bmatrix}
    cos(\frac{\gamma}{2}) & -sin(\frac{\gamma}{2}) \\ cos(\frac{\gamma}{2}) & sin(\frac{\gamma}{2})    
    \end{bmatrix} = \frac{1}{\sqrt{a^2 + b^2}} \begin{bmatrix}
        a & -b \\ b & a
    \end{bmatrix}; \gamma = 2\,arccos\left( \frac{a}{\sqrt{a^2 +  b^2}} \right)
\end{equation}

Hence, in our case, $\gamma = 2\,arccos \left( \sqrt{\frac{c_1}{\sqrt{c_1 + c_2}}}\right)$ for the $R_Y$ gate in the PREP circuit, and the angle for the $R_Y$ gate is simply $-\gamma$ as we require the transpose of the PREP circuit. The extra square root is because we encode the square root of $c_1$ and $c_2$ not just $c_1$ and $c_2$ during PREP.

The circuit diagram is shown in Figure \ref{fig:real_LCU}.
\begin{figure}[H]
    \centering
    \includegraphics[width=0.6\linewidth]{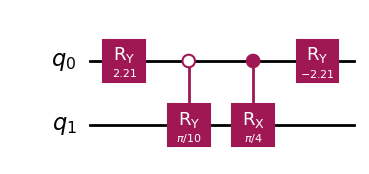}
    \caption{Real-weighted LCU implementation using single-qubit rotations and controlled operations.}
    \label{fig:real_LCU}
\end{figure}

\subsubsection{Complex weighted LCU}
We next extend our implementation to complex-weighted LCU, using weights $c_1 = 1+1i, c_2=2+2i$ for the same $R_Y(\pi/10)$ and $R_X(\pi/4)$ respectively. In the PREP stage with complex probability amplitudes, we form two complex amplitudes $z_1 = \sqrt{\alpha + i\beta}$ and $z_2 = \sqrt{\gamma + i\delta}$. These are then decomposed into their real and imaginary constituents: $z_1 = a+ib$ and $z_2 = c+id$. The rotation angles (for the PREP circuit) are then defined as:

\begin{equation}
    \begin{split}
        \theta = 2arccos\left( \left(\frac{a^2 + b^2}{a^2+b^2+c^2+d^2}\right)^{\frac{1}{4}} \right) \,\,\,\,\,\,\,\,\,\,\,\,\,\,\,\,\,\,\,\,\,\,\,\,\,\\
        \theta_1 = 2arccos\left(\sqrt{\frac{a}{\sqrt{a^2+b^2}}}\right) ;\,\,
        \theta_2 = 2arccos\left(\sqrt{\frac{c}{\sqrt{c^2+d^2}}}\right)
    \end{split}
\end{equation}

The state-preparation unitary (PREP) is implemented by the sequence: $PREP = X \cdot P(\theta_1) \cdot X \cdot P(\theta_2)\cdot R_y(\theta)$. These encoded weights include a normalization factor similar to the real-weighted case, which can be removed in classical post-processing to retrieve the correct values. The above circuit is valid when $\alpha, \beta, \gamma, \delta > 0$, and can be extended to cases where they are negative, by performing appropriate sign changes while initializing the angles. In the case of the UNPREP operation, the circuit for the same is given as $UNPREP = R_y(-\theta)\cdot P(\theta_2)\cdot X \cdot P(\theta_1) \cdot X$. Here in addition, to the angle negation of the $R_Y$ gates, the gate order is also reversed. The circuit for complex-weighted LCU is illustrated in Figure \ref{fig:complex_LCU}.

\begin{figure}[H]
    \centering
    \includegraphics[width=1.0\linewidth]{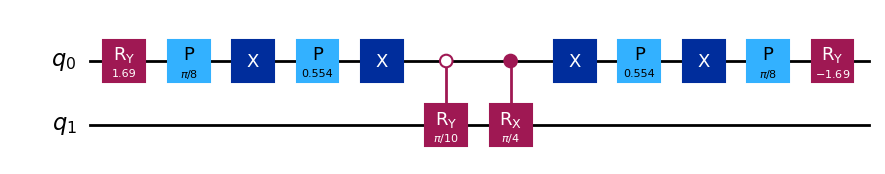}
    \caption{Quantum Circuit for Complex-weighted LCU}
    \label{fig:complex_LCU}
\end{figure}

\subsubsection{Single qubit QLT circuit}

Having performed the real-weighted LCU and complex-weighted LCU, we next tried to combine them and make a single qubit QLT circuit. The exact matrix forms of the operators required for this implementation were derived directly from the final form of the SELECT operator presented in Sec.~\ref{sec:deriving_select_form}. 

In Fig.(\ref{fig:single_qubit_qlt}), the circuit is provided for a single qubit QLT implementation for the discrete Laplace transform of the function $f(t) = e^{-0.9t}$.

\begin{sidewaysfigure}
    \centering

    \begin{minipage}{1.1\textwidth}
        \centering
        \includegraphics[width=\linewidth]{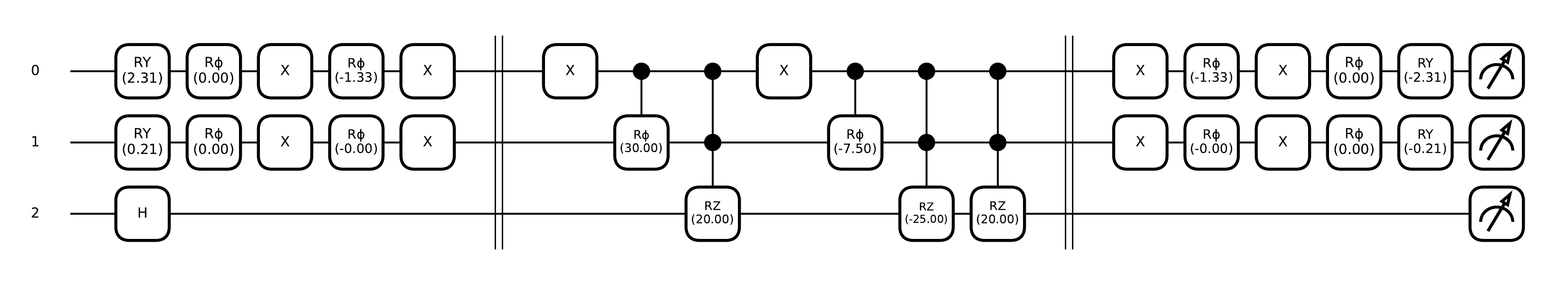}
        \captionof{figure}{Single qubit QLT circuit implemented in PennyLane.}
        \label{fig:single_qubit_qlt}
    \end{minipage}

    \vspace{1cm}  

    \begin{minipage}{1.1\textwidth}
        \centering
        \includegraphics[width=\linewidth]{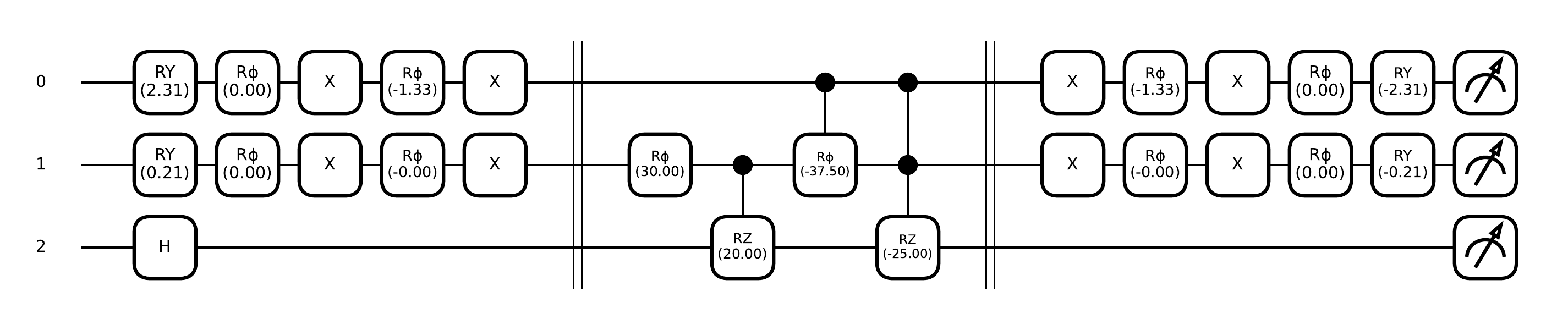}
        \captionof{figure}{Single Qubit QLT with further reduction.}
        \label{fig:further_depth_reduction}
    \end{minipage}

\end{sidewaysfigure}

\subsubsection{Quantum circuit results comparison with Classical numerical implementation}
To verify the simulation results obtained from our one-qubit QLT implementation, we compared the same with the numerical (classical) implementation of the Lap-LCHS formula. The Table~\ref{tab:qlt_lap_lchs_comparison} demonstrates that the Qiskit Implementation matches the classical implementation upto 8 digits of precision for both the probability amplitudes that are also the values for the Laplace-transformed function $(f(t) = e^{-0.9t})$ at two $s$ values $1+1i$ and $2+2i$.

\begin{table}[h!]
\centering
\caption{Comparison between QLT Implementation (Ideal State-vector Simulation in Qiskit) and Lap-LCHS Classical Numerical Implementation}
\label{tab:qlt_lap_lchs_comparison}
\begin{tabular}{ |c|c|c| } 
\hline
\textbf{Value} & \textbf{QLT Quantum Circuit} & \textbf{Lap-LCHS Classical Numerical} \\
\hline
1 & 8.96834845 - 0.9033673i & 8.96834845 - 0.9033673i \\
\hline
2 & 8.85408937 - 0.93779124i & 8.85408937 - 0.93779124i \\
\hline
\end{tabular}
\end{table}

\subsubsection{Further Reduction in gate counts in SELECT}

It is possible to further reduce the gate counts in the SELECT operator by merging certain double-controlled gates, thereby reducing the overall gate counts. The modified equation Eq. \ref{eqn: lower_depth_SEL} is the following:

\begin{equation}\label{eqn: lower_depth_SEL}
\begin{aligned}
\text{SEL} ={}& \prod_{a=0}^{d-1} \prod_{b=0}^{d'-1} \left( |0\rangle\langle 0| \otimes |0\rangle\langle 0| \otimes I_n + \ket{0}\bra{0} \otimes \ket{1}\bra{1} \otimes e^{-ih_t(-K+1)L(2^b)}\right. \\
&+ \left.\ket{1}\bra{1} \otimes \ket{0}\bra{0} \otimes I_n + \ket{1}\bra{1} \otimes \ket{1}\bra{1} \otimes \left(e^{-ih_t h_j L(2^{a+b})} \cdot e^{-ih_t(-K+1)L(2^b)}\right) \right)
\end{aligned}
\end{equation}

Although this optimization does not alter the asymptotic scaling of $\mathcal{O}((logN)^3)$, it provides meaningful practical efficiency gains. The 1 qubit QLT circuit with the reduction is shown in Fig.~\ref{fig:further_depth_reduction}.

\subsubsection{General Implementation of QLT}

Building on top of the work done previously, in this section, we describe the generalized implementation of QLT. In the previous sections, state preparation was performed using $R_y, X, P$ gates. Here, due to the exponentially increasing depth of the state-preparation circuits, we do not explicitly use the aforementioned gates, but use state-preparation operators available in Pennylane to encode the input data. This is done because state-preparation in itself is beyond the scope of this work. 

The exact circuit for QLT with 16 input sample (discretization) points is provided in Appendix 1. Hence, the input register, $M_k$ and $M_t$ that is $j$ and $l$ registers in our algorithm are 4 qubits each respectively. Here, classical numerical results and quantum implementation results converged to a common value with increased precision as predicted with an increasing number of qubits.

\section{Conclusion}
\label{sec:conclusion}
This work presents a quantum algorithm for the Laplace transform that achieves provable asymptotic improvements in gate and width complexity relative to classical approaches. By establishing the commutativity of all Pauli-decomposed exponentials involved in the construction, we enabled a single-step Trotterization method that simplifies circuit synthesis and reduces resource overhead. The complete implementation of the QLT in Pennylane, together with the supporting correctness proofs and complexity analysis is done in this work.

We also want to highlight several limitations right now to our work. Similar to the QFT, the QLT provides a subroutine speedup rather than an end-to-end algorithmic speedup; hence, it is best utilized as a primitive subroutine of a larger algorithm. Also, the preparation of arbitrary input states continues to constitute a dominant practical bottleneck.  Additionally, practical applications of QLT beyond its role in just the Laplace transformation have yet to be clearly identified, limiting its immediate applicability.

These limitations motivate several avenues for future research. Developing a quantum algorithm for the inverse discrete Laplace transform would complete the computational pipeline and address a key bottleneck in many numerical methods. Moreover, integrating QLT and its inverse with the Quantum Linear System Solvers~\cite{Harrow_2009, DUAN_hhl_2020} algorithm could enable end-to-end quantum solvers for ordinary and partial differential equations \cite{lubasch2025quantumcircuitspartialdifferential}, providing a means to assess potential quantum advantage in practical scientific settings. Finally, extending the QLT framework to applications such as quantum imaginary time evolution in the resolvent domain \cite{doi:10.1021/jp0036689} can be beneficial to simulating open quantum systems \cite{an2023quantumalgorithmlinearnonunitary}. An additional observation is that, analogous to the Lap-LCHS setting, our method extends naturally to any efficient quantum eigenvalue transformation. In particular, sampling the output at evaluation points that follow an arithmetic progression enables us to reinterpret such an eigenvalue transformation as an efficient unitary evolution on a state vector.

In summary, this work establishes QLT as a theoretically robust and computationally efficient primitive, achieving both poly-logarithmic gate and circuit width complexity in the number of qubits. While practical challenges persist, the results presented here provide a solid foundation for further refinement and application of the QLT framework in diverse scientific and engineering contexts.

\section*{Acknowledgments}

The authors express their sincere gratitude to Dr. G. Raghavan and Dr. K. Srinivasan of the Defence Institute of Advanced Technology, Pune, for their constant guidance and support throughout this work. The authors also thank Dr. Debajyoti Bera of the Indraprastha Institute of Information Technology, Delhi; Dr. Shantanav Chakraborthy of the International Institute of Information Technology, Hyderabad; Dr. Alok Shukla of Ahmedabad University, Ahmedabad and Dr. T S L Radhika of BITS Pilani, Hyderabad Campus for their timely guidance and invaluable support. The authors gratefully acknowledge the advisors of Qclairvoyance Quantum Labs for their guidance, constructive discussions, and continued support during the development of this research.

 \section*{Funding}
This research received no specific grant from any funding agency in the public, commercial, or not-for-profit sectors.

\section*{Competing interests}

The authors declare no competing interests.

\printbibliography

\appendix       

\newpage
\section*{APPENDIX 1}\label{sec : appendix_1}
\addcontentsline{toc}{section}{APPENDIX 1}

\subsection*{QLT circuit – 4 qubits}

\begin{figure}[H]
    \centering
    \includegraphics[width=0.6\linewidth]{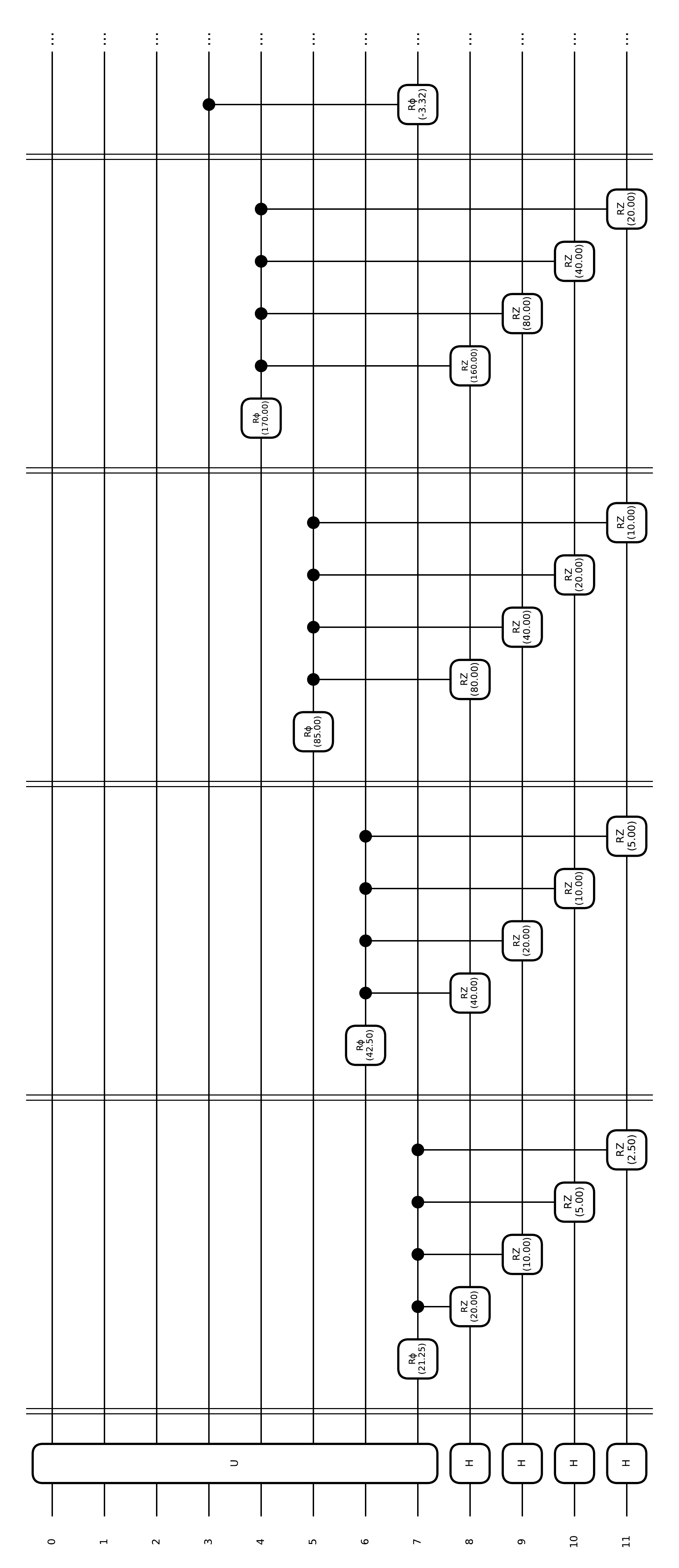}
    \caption{QLT circuit with input encoded in 4 qubits (16 points)}
    \label{fig:qlt_4_1}
\end{figure}

\begin{figure}[H]
    \centering
    \includegraphics[width=0.65\linewidth]{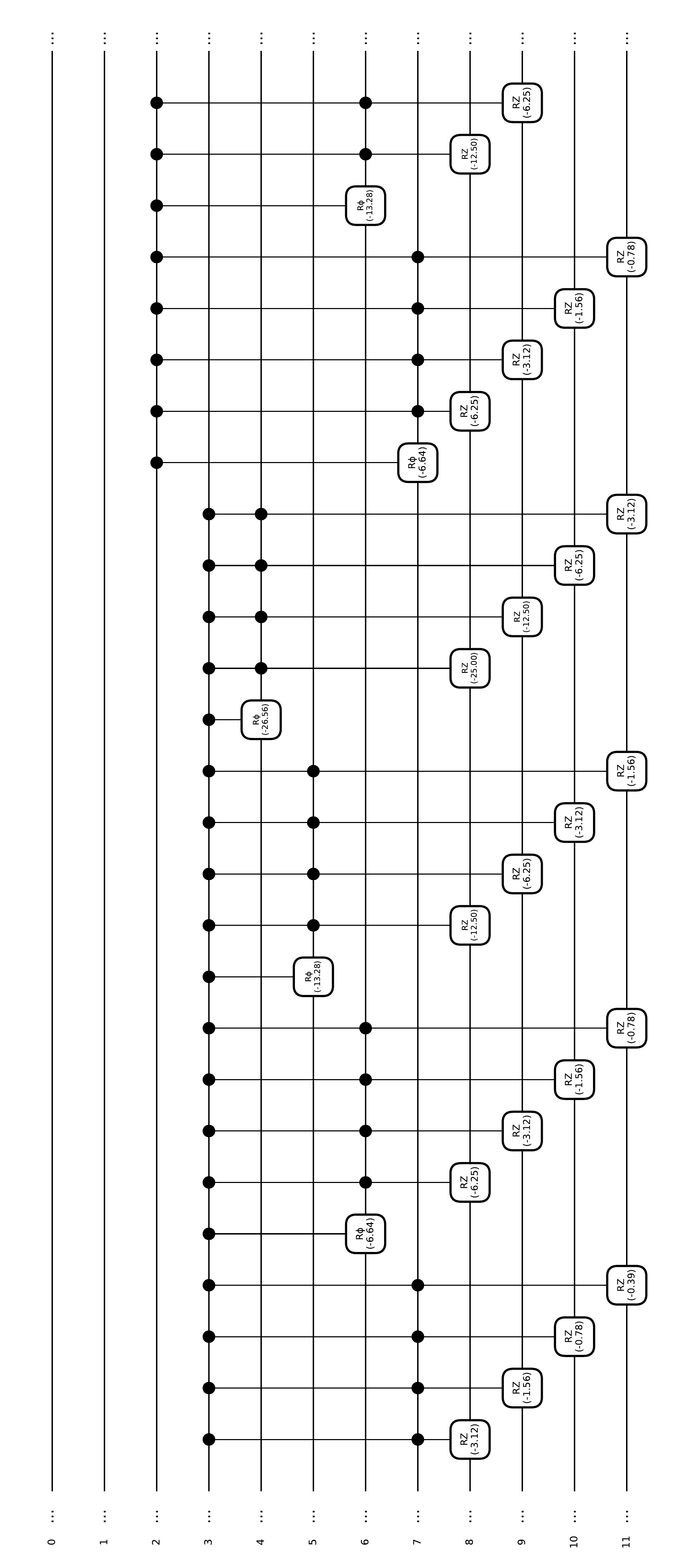}
    \caption{QLT circuit with input encoded in 4 qubits (16 points) contd.,}
    \label{fig:qlt_4_2}
\end{figure}

\begin{figure}[H]
    \centering
    \includegraphics[width=0.65\linewidth]{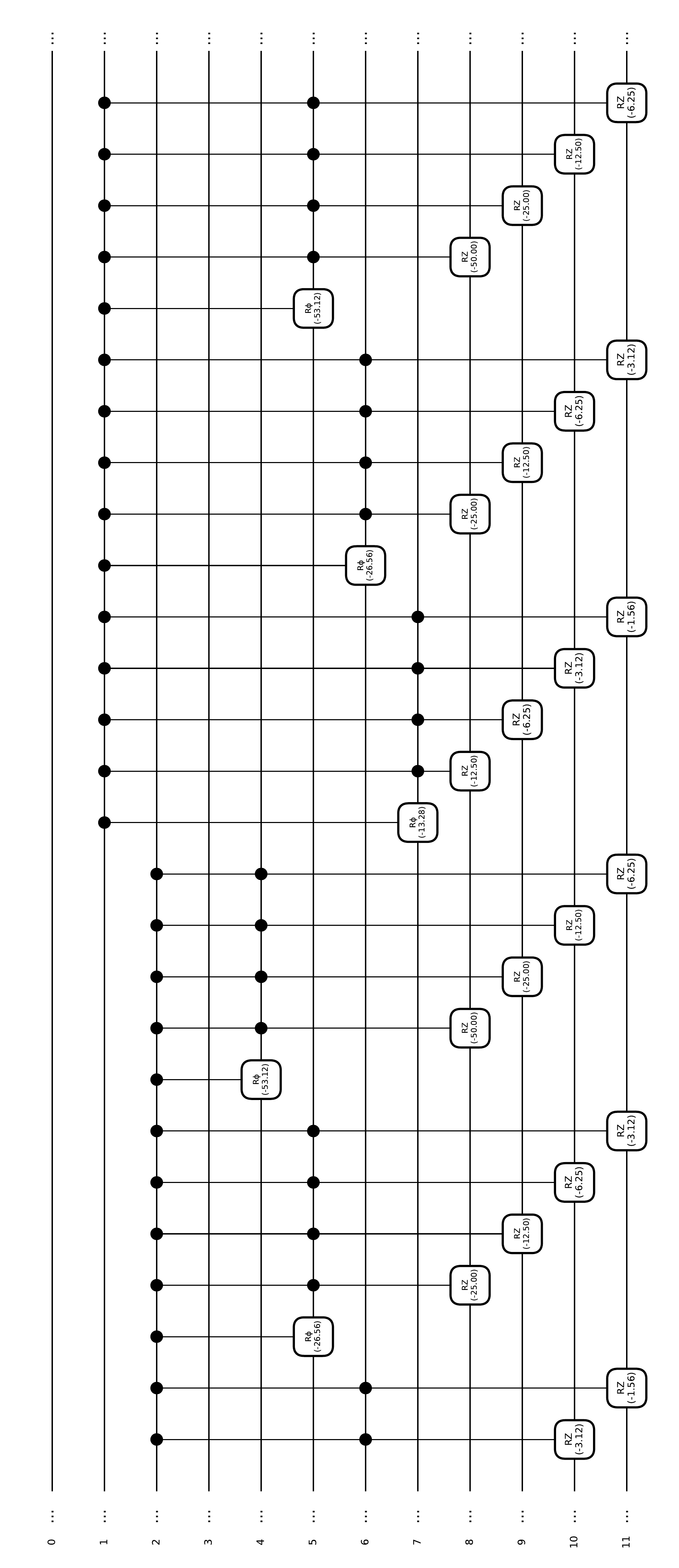}
    \caption{QLT circuit with input encoded in 4 qubits (16 points) contd.,}
    \label{fig:qlt_4_3}
\end{figure}

\begin{figure}[H]
    \centering
    \includegraphics[width=0.65\linewidth]{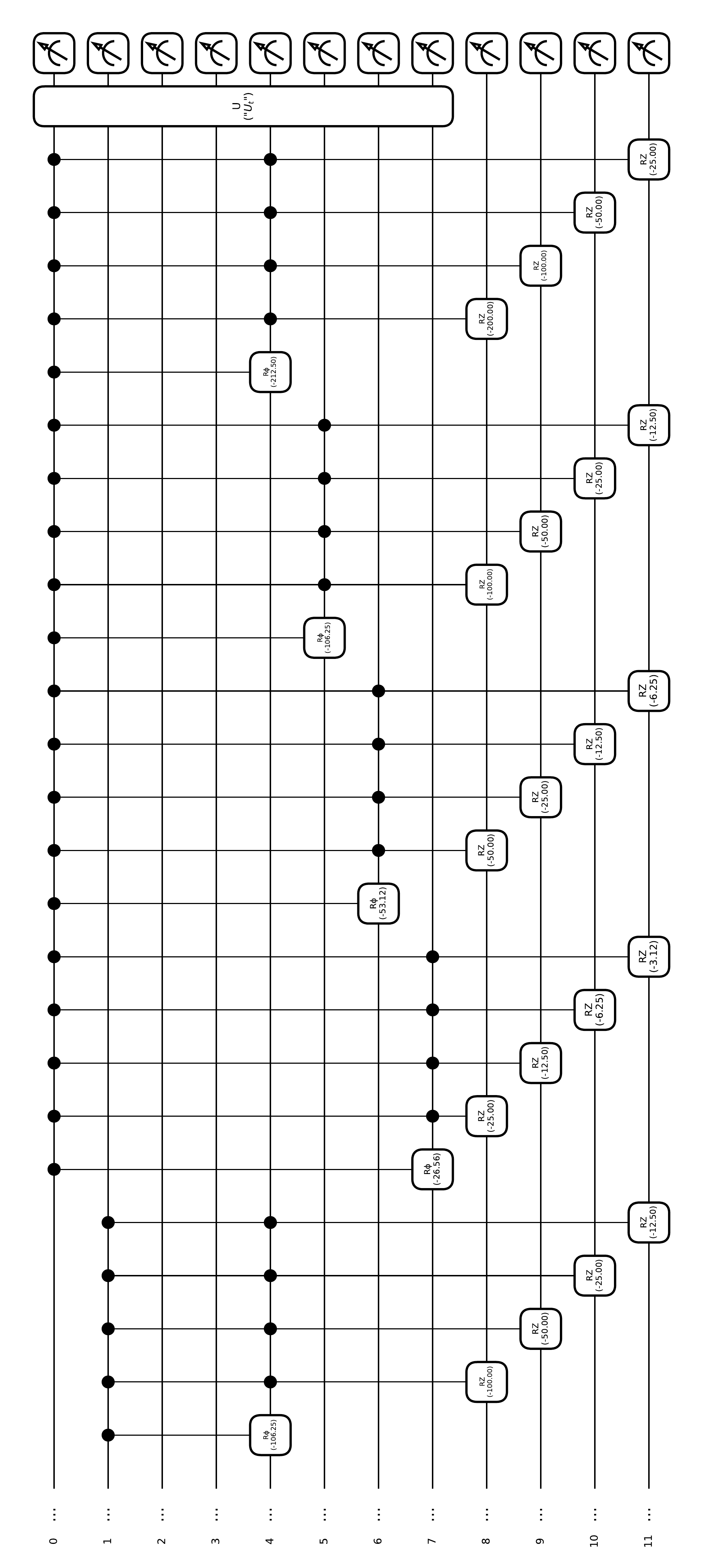}
    \caption{QLT circuit with input encoded in 4 qubits (16 points) contd.}
    \label{fig:qlt_4_4}
\end{figure}

\section*{APPENDIX 2: Notation and Symbols}
\addcontentsline{toc}{section}{APPENDIX 2: Notation and Symbols}

\begingroup
\renewcommand{\descriptionlabel}[1]{\hspace\labelsep\normalfont#1}

\begin{description}[leftmargin=2.5cm,style=multiline]
  \item[$A$] Diagonal matrix encoding the (discretized) Laplace variables: \(A=\mathrm{diag}(s_0,s_1,\dots,s_{N-1})\), where \(s_j=a_j+i b_j\).
  \item[$s_j$] The \(j\)-th Laplace parameter value, written \(s_j=a_j+i b_j\) with real part \(a_j\) and imaginary part \(b_j\).
  \item[$a_j,\; b_j$] Real and imaginary parts of \(s_j\). We assume \(\{a_j\}\) and \(\{b_j\}\) form arithmetic progressions.
  \item[$L,\; H$] Cartesian decomposition of \(A\): \(L=(A+A^\dagger)/2\) (Hermitian, real part), \(H=(A-A^\dagger)/(2i)\) (Hermitian, imaginary part).
  \item[$\mathcal{L}\{f\},\; h(A)$] Laplace transform / matrix-function notation. \(h(A)\) denotes the matrix function obtained by applying the transform to the eigenvalues of \(A\).
  \item[$g(t)$] Function on which laplace transform is done satisfying \(h(A)=\int_0^\infty g(t)\,e^{-At}\,dt\).
  \item[$f(k)$] Kernel used in the Lap-LCHS one-dimensional integral representation (\(f(k)=\bigl(2\pi e^{-2^\beta} e^{(1+ik)^\beta}\bigr)^{-1}\).
  \item[$M_k,\;M_t$] Number of discretization points used for k and t respectively.
  \item[$k_j,\; t_\ell$] Discrete grid points: \(k_j=-K+j\,h_k\) for \(j=0,\dots,M_k-1\) and \(t_\ell=\ell\,h_t\) for \(\ell=0,\dots,M_t-1\).
  \item[$h_k,\; h_t$] Grid spacings (step sizes) for the \(k\)- and \(t\)-discretizations.
  \item[$c_j,\; \hat c_\ell$] Discrete coefficients arising from quadrature weights:
    \(c_j = h_k\,f(k_j)/(1-i k_j)\) and \(\hat c_\ell = h_t\,g(t_\ell)\).
  \item[$\|c\|_1,\; \|\hat c\|_1$] \(\ell_1\)-norms (sum of absolute values) of vectors \(c=(c_j)_j\) and \(\hat c=(\hat c_\ell)_\ell\), used for normalization in PREP.
  \item[PREP, UNPREP] State-preparation and its transpose (un-preparation) used in LCU: PREP prepares the ancilla superposition proportional to \(\sqrt{c_j}\) or \(\sqrt{\hat c_\ell}\); UNPREP denotes its transpose.
  \item[LCU] Linear Combination of Unitaries: representation \(A=\sum_j \alpha_j U_j\).
  \item[SELECT (SEL)] The controlled-operation \(\sum_j |j\rangle\langle j|\otimes U_j\) (extended here to two-index select \(\sum_{j,\ell}|j\ell\rangle\langle j\ell|\otimes U(k_j,t_\ell)\)).
  \item[$U(k_j,t_\ell)$] Elementary unitary \(e^{-i t_\ell (k_j L + H)}\) appearing in the discretized Lap-LCHS sum.
  \item[Block encoding] A unitary \(U\) such that \((\langle 0^a|\otimes I)U(|0^a\rangle\otimes I)=A/\alpha\); \(\alpha\) is the normalization factor and \(a\) is the number of ancilla qubits.
  \item[QSVT / QET] Quantum Singular Value / Eigenvalue Transformation  families of techniques for applying polynomial functions to singular values / eigenvalues.
  \item[Trotterization / product formula] First-order (or higher) splitting used to approximate exponentials of sums: \(e^{-i(H_1+H_2)t}\approx(e^{-iH_1 t/r}e^{-iH_2 t/r})^r\). In this work a single-step Trotter is used because terms commute.
  \item[$\beta$] Parameter appearing in the chosen \(f(k)\) kernel, typically \(0<\beta<1\).
  \item[$N$] Problem size, typically \(N=2^n\) where \(n\) is the number of system qubits encoding the input vector.
  \item[$d,\; d'$] Number of ancilla qubits for the \(j\) and \(\ell\) (or \(k\) and \(t\)) index registers;  \(d=\log M_k,\; d'=\log M_t\).
  \item[$|j\rangle,\; |\ell\rangle$] Computational basis states of the ancilla/index registers used by PREP/SELECT.
  \item[$|\psi\rangle$] System register holding an equal superposition state.
  \item[$\mathcal{O}(\cdot)$] Asymptotic notation; $\mathcal{O}$ hides constant factors.

  \item[$\alpha_L,\; \alpha_H$] Normalization constants used when block-encoding \(L\) and \(H\) respectively.
 \item[$R_Z(\theta)$] \space Single-qubit \(Z\)-rotation gate. 
\item[$\mathrm{Phase}(\phi)$] \space Single-qubit phase-shift gate. 

  \item[Global phase] Scalar phase factor that does not affect measurement probabilities.
 
\end{description}

\endgroup

\end{document}